\theoremstyle{plain}
\newtheorem{theorem}{Theorem}[section]
\newtheorem{assumption}[theorem]{Assumption}
\newtheorem{corollary}[theorem]{Corollary}
\theoremstyle{definition}
\theoremstyle{remark}
\begin{document}

\articletype{RESEARCH ARTICLE}

\title{A new Kernel Regression approach for Robustified  $L_2$ Boosting}

\author{\name{Suneel Babu Chatla\textsuperscript{a} \thanks{CONTACT Suneel Babu Chatla. Email: sbchatla@utep.edu}}
\affil{ Department of Mathematical Sciences, University of Texas at El Paso, El Paso, Texas,  USA }
}
%



\begin{abstract}
We investigate $L_2$ boosting in the context of kernel regression. Kernel smoothers, in general, lack appealing traits like symmetry and positive definiteness, which are critical not only for understanding theoretical aspects but also for achieving good practical performance. We consider a projection-based smoother \citep{hua08analysis} that is symmetric, positive definite, and shrinking. Theoretical results based on the orthonormal decomposition of the smoother reveal additional insights into the boosting algorithm. In our asymptotic framework, we may replace the full-rank smoother with a low-rank approximation.  We demonstrate that the smoother's low-rank ($d_n$) is bounded above by $O(h^{-1})$, where $h$ is the bandwidth. Our numerical findings show that, in terms of prediction accuracy,  low-rank smoothers may outperform full-rank smoothers. Furthermore, we show that the boosting estimator with low-rank smoother achieves the optimal convergence rate. Finally,  to improve the performance of the boosting algorithm in the presence of outliers, we propose a novel robustified boosting algorithm which can be used with any smoother discussed in the study. We investigate the numerical performance of the proposed approaches using simulations and a real application. 
\end{abstract}

\begin{keywords} 
eigenvalues;  reduced-rank; pseudo data;  Huber loss function
\end{keywords}


\maketitle

\section{Introduction}\label{sec:intro}
Boosting, also known as Gradient boosting, is a prominent machine learning method. It attempts to produce more accurate predictions by integrating the predictions of several ``weak" models which are referred to as  \emph{weak learners} \citep{schapire1990strength, freund1995boosting,freund1996experiments}. Boosting begins with a reasonable estimator, the learner, and  improves iteratively depending on the performance on  training data \citep{buhlmann2003boosting}. Given its empirical success, several attempts have been made by  both the statistics and machine learning communities  to demystify the boosting method's better performance and resistance to overfitting \citep{bartlett1998boosting,breiman1998arcing,breiman1999prediction,schapire1999improved,friedman2000additive,friedman2001greedy,buhlmann2003boosting,park2009}. 

Theoretical underpinnings and practical implementations of boosting are predicated on the assumption that it may be thought of as a functional gradient descent algorithm \citep{breiman1999prediction}. For example, under appropriate risk functions, AdaBoost \citep{freund1996experiments} and LogitBoost \citep{friedman2000additive} algorithms may be seen as optimization problems. \citet{friedman2001greedy} proposed least-squares boosting ($L_2$ boosting), a computationally simple variation of boosting,  and explored some robust algorithms that used regression trees as weak learners. Recent literature on boosting focuses on proposing robust algorithms for classification, regression, and nonparametric regression. For more details, we refer to \citet{ju2021robust}, \citet{li2018boosting}, \citet{miao2015rboost} and the references therein. However,  $L_2$ boosting has received little attention in the literature of kernel smoothing.

The purpose of this research is to give additional insights about $L_2$ boosting utilizing the kernel smoothing framework. We primarily examine the theoretical findings for boosting estimates using a low-rank smoother, which allows the approach to be scaled to huge datasets. In addition, we present a robustified boosting approach to mitigate the influence of outliers in the estimation. These findings are novel in the literature, particularly in the context of kernel smoothing framework. 

We investigate a univariate nonparametric regression model in this study because of it's simplicity in theoretical arguments.  Assume $(X_i,Y_i)$, $i=1,\ldots,n$, are $n$ independent copies of a random pair $(X,Y)$. We consider the model 
\begin{align}
    Y_i &= m(X_i) + \epsilon_i, \qquad i=1,\ldots,n, \label{eqn:model}
\end{align}
 where $m(\cdot)=E(Y \vert X=\cdot)$ is the regression function and $\epsilon_i$'s are random variables with $E(\epsilon_i)=0$ and $\text{var}(\epsilon_i)=\sigma^2$. \citet{buhlmann2003boosting} investigated the properties of $L_2$ boosting and presented expressions for average squared bias and average variance of a boosting estimate for the model (\ref{eqn:model}). These expressions involve eigenvalues and eigenvectors of the corresponding smoother. When these eigenvalues are between 0 and 1, the squared bias decays exponentially quickly and the variance increases exponentially small as the number of boosting iterations increases \citep{buhlmann2003boosting}. The number of boosting iterations is treated as a tuning or regularization parameter in this exponential bias-variance trade-off in the literature. In addition, \citet{buhlmann2003boosting} shows that $L_2$ boosting with smoothing splines achieves optimal rate  $n^{-2\pi/(2\pi+1)}$  if the iteration number $b$ is of order $O(n^{2r/(2\pi+1)})$ as the sample size $n$ goes to infinity, where $r$ is the order of the smoothing spline, and $\pi (> r)$ is the smoothness index of the regression function. 

As indicated in \citet{di2008boosting}, one exciting aspect of boosting  is that it may be utilized as a bias reduction technique, particularly in the kernel smoothing framework. This may be traced back to \citet{tukey1977exploratory}, which refers to one-step boosting as ``twicing''. In case of fixed equispaced $n$ design points, $i/n$, $i=1,\ldots,n$, twicing a kernel smoother is asymptotically equivalent to applying a higher-order kernel \citep{stuetzle1979some}.  \citet{di2008boosting} proposed Nadaraya-Watson $L_2$ boost algorithm and demonstrated its empirical performance. Because their smoother is not symmetric,  it does not provide positive characteristic roots for several popular kernels such as Epanechnikov, Biweight, and Triweight. If the smoother's eigenvalues are outside of $(0, 1]$, boosting will not operate effectively \citep{buhlmann2003boosting}. As a result, their method works effectively only for kernels with strictly positive eigenvalues, such as Gaussian and Triangular. \citet{di2008boosting} also demonstrates that their estimator achieves bias reduction after the first boosting iteration while keeping the variance order  asymptotically the same. \citet{park2009} also investigates Nadaraya-Watson $L_2$ boosting and shows that if the iteration number $b$ is big enough and the bandwidth is appropriately set, $h=O(n^{-1/(2\pi+1)})$, the boosting estimate achieves the optimal rate of convergence.

Low-rank matrix approximation is widely studied in the literature, especially in machine learning and statistics.  It is a popular technique in massive data analysis. For details, we refer to \citet{kishore2017literature} and the references therein. Let $A$ be $n \times n$ real, symmetric matrix, then we write its rank $d_n (<n)$ (low-rank) approximation as 
\begin{align*}
  A_{n \times n} \approx B_{n \times d_n} B_{d_n \times n}^T,
\end{align*}
where $B$ is a rank $d_n$ matrix. This approximation is very economical for storage as it requires only $nd_n$ elements to be stored instead of the original $n^2$ elements. Moreover, there exist efficient probabilistic algorithms for constructing the above decomposition even when $n$ is very large. We refer to \citet{halko2011finding} for an in-depth discussion on this topic. Besides the computational simplicity, the low-rank approximation can also be used to remove noise in the data. In applications, the rank to be removed often corresponds to the noise level where the signal-to-noise ratio is low \citep{chu2003structured}. Boosting is known to produce superior performance, at least empirically, if the base learner is weak\citep{buhlmann2003boosting}. The low-rank smoother serves as a weaker learner than the full rank smoother and hence it may outperform the full rank smoother as we observed in our numerical results in Section \ref{sec:sim}. To the best of our knowledge, our study is the first to investigate the role of low-rank smoothers in boosting.  

In this work, we choose a projection-based smoother \citep{hua08analysis} that is symmetric, positive definite, shrinking and has eigenvalues in $(0, 1]$ for most of the standard kernel functions including Epanechnikov and Biweight. With this smoother, we obtain the asymptotic results for the boosting estimate. Our findings, which are comparable to \citet{buhlmann2003boosting},  are based on the orthonormal decomposition of the smoother\citep{hua08analysis} and offers new insights into the boosting. Because of our asymptotic framework, we can replace the high-rank smoother with a low-rank approximation and execute the boosting approach on a large sample size without having to store all of the smoother's elements. In Theorem \ref{thm:eigen-bias}, we show that the smoother's low-rank ($d_n$) is bounded above by $O(h^{-1})$ where $h$ is the bandwidth. We also give an expression for the approximation error and demonstrate that it converges to zero when the low-rank $d_n=O(h^{-1})$. According to our numerical study, we demonstrate that low-rank smoothers may outperform full-rank smoother in terms of prediction accuracy while lowering the computational effort. Furthermore, in Theorem \ref{thm:min-max}, under some regularity conditions,  we show that boosting estimate of a low-rank smoother achieves the optimal convergence rate, $n^{-2\pi/(2\pi+1)}$, for an appropriately chosen bandwidth, $h=O(n^{-1/(2\pi+1)})$, given the smoothness index $\pi$.

The $L_2$ boosting algorithm, which uses a $L_2$ loss function, is sensitive to outliers in data. Robustification of the boosting method has not received much attention in the literature.  \citet{friedman2001greedy} presented a few robust boosting algorithms by employing regression trees as base learners. Similarly, \citet{lutz2008robustified} presented five robustification algorithms  for linear regression using $L_2$ boosting. We present a novel robust boosting algorithm to estimate the model (\ref{eqn:model}). The proposed method uses a pseudo-outcome approach \citep{cox1983asymptotics,oh2007role, oh2008recipe}, which converts the original problem of robust loss function optimization to the problem of least-squares loss function optimization. This approach is employed at each iteration of the boosting algorithm. We further demonstrate in Theorem \ref{thm:rob-eq} that the estimate based on the pseudo-outcome is asymptotically equivalent to the estimator obtained directly by optimizing the robust loss function. This robustified boosting algorithm is very general and can be used with any smoother. In our numerical study, in addition to the projection-based smoother, we employed this algorithm to both Nadaraya-Watson smoother and spline smoother which is also new.

The paper is organized as follows. In Section \ref{sec:background}, we provide a brief discription of the methods used in the study. Specifically, Section \ref{sec:back} provides a brief introduction to the projection-based smoother matrix \citep{hua08analysis} and Section \ref{sec:l2boost} outlines the algorithm for $L_2$ boosting. In Section \ref{sec:theory} we discuss the theoretical results related to boosting with kernel regression and in section \ref{sec:low-rank} we discuss the asymptotic properties of the boosting estimate of a low-rank smoother. In Section \ref{sec:rob-l2} we outline the robustified boosting algorithm. Section \ref{sec:sim} discusses the simulation results and Section \ref{sec:real-app} illustrates the usefulness of the proposed methods using the data from a real application. We summarize our findings in Section \ref{sec:sc}.

\section{Background} \label{sec:background}
\subsection{ Smoother Matrix } \label{sec:back}
 We consider the local linear modeling approach \citep{fan2018local} which estimates the regression function $m(x)$ in model (\ref{eqn:model}) using a first-order Taylor expansion $m(x)+ m^{(1)}(x)(X-x)$ for $X$ in a neighborhood of $x$. Let 
\begin{align*}
   \bm{X}_x^T &= \begin{bmatrix}
       1 & 1 & \cdots & 1 \\  X_{1}-x & X_2-x & \cdots & X_n-x
   \end{bmatrix}   
\end{align*}
be a design matrix and  $\bm{W}_{x}=\text{diag}\{K_{h}(X_{1}-x),\ldots,K_{h}(X_{n}-x)\}$ be a weight matrix with $K_{h}(\cdot)=K(\cdot/h)/h$ where $K(\cdot)$ is a positive and symmetric probability density function defined on a compact support, say $[-1,1]$, and  $h$ is a bandwidth. Denote the response vector $\bm{y}=(Y_1,\ldots,Y_n)^T$ and coefficient vector $\bm{\beta}=(\beta_0, \beta_1)^T$. Then 
\begin{align}
(\widehat{m}(x), \widehat{m}^{(1)}(x)) &= \underset{\beta_0, \beta_1}{\text{min}} \frac{1}{n} \sum_{i=1}^n \left(Y_i- \beta_0 - \beta_1 (X_{i}-x) \right)^2 K_{h}(X_{i}-x) \nonumber \\  
&=  \underset{\beta_0, \beta_1}{\text{min}} \frac{1}{n} \left(\bm{y}-\bm{X}_{x}\bm{\beta}\right)^T \bm{W}_{x}\left(\bm{y}-\bm{X}_{x}\bm{\beta}\right).  \label{eqn:back-uob}
\end{align}
Suppose the random variable $X$ is compactly supported, say $[0, 1]$. Let $K_h(u,v)$ be the boundary corrected kernel defined in \citet{mammen1999existence} as  
\begin{align*}
  K_h(u,v) &= \frac{K_h(u-v)}{\int K_h(w-v)dw} I(u,v \in [0, 1]),
\end{align*}
where $I(\cdot)$ is an indicator function.
The smoother matrix $\bm{H}_{1}^*$ (for local linear) in \citet{hua08analysis} is based on integrating local least-squares errors (\ref{eqn:back-uob})
\begin{align}
\frac{1}{n} & \int \sum_{i=1}^n \left(Y_i-\widehat{m}(x)- \widehat{m}^{(1)}(x)(X_{i}-x)\right)^2 \nonumber \\ & \qquad \qquad \times K_{h}(x, X_{i}) dx
=   \frac{1}{n} \bm{y}^T\left(\bm{I}-\bm{H}_{1}^*\right)\bm{y},
\label{eqn:back-iuobj}
\end{align}
where $\bm{I}$ is an $n$- dimensional identity matrix and 
\begin{align*}
  \bm{H}_1^* &= \int \bm{W}_x \bm{X}_x (\bm{X}_{x}^{T}\bm{W}_{x}\bm{X}_{x})^{-1} \bm{X}_x^T \bm{W}_x dx  
\end{align*}
 is a local linear smoother with its $(i,j)$th element, $w_j(X_i)$, is defined as
\begin{align}
  w_j(X_i) &=\int \big[K_{h}(x, X_{i}) \quad (X_i-x) K_{h}(x, X_{i})\big](\bm{X}_{x}^{T}\bm{W}_{x}\bm{X}_{x})^{-1} \nonumber\\ & \qquad \qquad \qquad \times \big[K_{h}(x, X_{j}) \quad (X_j-x) K_{h}(x, X_{j})\big]^T dx.
\label{eqn:back-hdef1}
\end{align}
It is worth mentioning that $\sum_{j=1}^nw_j(X_i)=1$, $\sum_{i=1}^nw_j(X_i)=1$, and $\sum_{j=1}^n w_j^2(X_i)=O_p(n^{-1}h^{-1})$ for $i=1,\ldots,n$. 
The $i$th fitted value can be written as 
\begin{align}
  \label{eqn:back-fit} 
\widehat{m}_{LL}^*(X_{i})= \bm{e}_i^T \bm{H}_1^* \bm{y} = \sum_{j=1}^n w_j(X_i) Y_j &=\int \left(\widehat{m}(x)+\widehat{m}^{(1)}(x) (X_{i}-x) \right) K_{h}(x,X_{i})dx,
\end{align}
where $\bm{e}_i$ is the unit vector of length $n$ with 1 at the $i$th position. In other words,
\begin{align}
    \widehat{\bm{m}}_{LL}^* := (\widehat{m}_{LL}^*(X_{1}),  \ldots, \widehat{m}_{LL}^*(X_{n}))^T = \bm{H}_1^* \bm{y}. \label{eqn:back-hfit}
\end{align}
The estimator $\widehat{m}_{LL}^*(X_{i})$  in  (\ref{eqn:back-fit}) achieves bias reduction at interior points $[2h,1-2h]$ \citep{he2009double}. For this reason, the bias of  $\widehat{m}_{LL}^*(X_{i})$ is of order $h^{4}$ instead of $h^2$ which is the standard order of bias for local linear estimator $\widehat{m}(x)$. \citet{hua08analysis} and \citet{hua14local} show that $\bm{H}_{1}^*$ is symmetric,  positive definite, and shrinking. 

Similar arguments can be used to define the smoother $\bm{H}_{0}^*$ (local constant) and the $i$th fitted value $\widehat{m}_{LC}(X_i)$ for local constant modeling. The estimator $\widehat{m}_{LC}^*(X_i)$ takes the following form
\begin{align}
  \label{eqn:back-fit-lc} 
\widehat{m}_{LC}^*(X_{i})&= \bm{e}_i^T \bm{H}_0^* \bm{y} =\int \widehat{m}_{NW}(x)  K_{h}(x,X_{i})dx,
\end{align}
where $\widehat{m}_{NW}(x)$ is the Nadaraya-Watson estimator.  The estimator $\widehat{m}_{LC}^*(X_{i})$ has the bias of order $h^2$ at interior points.


\subsection{ Boosting with Kernel Regression}\label{sec:l2boost}

The $L_2$ boosting algorithm may be considered as an application of the functional gradient descent technique, as shown in \citet{buhlmann2003boosting} and \citet{park2009}.  We present a pseudo algorithm for estimating $m(\cdot)$ in (\ref{eqn:model}). Hereafter, notation $\bm{H}^*$ is used to denote both $\bm{H}_{1}^*$ and $\bm{H}_{0}^*$. Denote $\widehat{\bm{m}}_a^*=(\widehat{m}_a^*(X_1),\ldots, \widehat{m}_a^*(X_n))^T$ for a given subscript $a$.

\hspace*{2em}

\textbf{(I). $L_2$ Boosting Algorithm:}

\begin{itemize}
    \item[] Step 1 (Initialization): Given the data $\{X_i, Y_i\}$, $i=1,\ldots,n$, fit an initial estimate $\widehat{\bm{m}}^*_0:= \bm{H}^*\bm{y}$ based on (\ref{eqn:back-hfit}) and (\ref{eqn:back-fit-lc}).
    \item[] Step 2 (Iteration): Repeat for $b=1,\ldots,B$.
    \begin{enumerate}
        \item Compute the residual vector $\bm{\delta} :=\bm{y}-\widehat{\bm{m}}^*_{b-1}$.
        \item Fit a nonparametric regression model to residual vector $\bm{\delta}$ to obtain $\widehat{\bm{m}}^*_\delta:= \bm{H}^*\bm{\delta}$.
        \item Update $\widehat{\bm{m}}^*_b = \widehat{\bm{m}}^*_{b-1} + \widehat{\bm{m}}^*_\delta$.
    \end{enumerate} 
\end{itemize}
\hspace{2em}

\citet{di2008boosting} and  \citet{park2009} investigated the properties of $L_2$ boosting for Nadaraya - Watson smoother, $\bm{S}_{NW}= ((K_h(X_{i}-X_j)/\sum_{j=1}^n K_h(X_i-X_j) ))_{1 \le i,j \le n}$.  As shown in \citet{buhlmann2003boosting}, one of the requirements for bias reduction with boosting is to have eigenvalues are in $(0,1]$. The smoother $\bm{S}_{NW}$, on the other hand, is not symmetric and may not contain all of the eigenvalues in $(0,1]$. The requirements on kernel functions for which $\bm{S}_{NW}$ yields eigenvalues in $(0,1]$ are provided by \citet{cornillon2014recursive}. They prove that the spectrum of $\bm{S}_{NW}$ ranges between zero and one if the inverse Fourier-Stieltjes transform of a kernel is a real positive finite measure. While this is true for positive definite kernels like Gaussian and Triangular, it is not true for non-positive definite kernels like Epanechnikov and Uniform. Given that kernels are chosen considering their computational efficiency, this finding is unexpected. Smoothers using Epanechnikov kernels converged quicker than those with Gaussian kernels in our numerical analysis in Section \ref{sec:sim}. 

The eigenvalues for both $\bm{S}_{NW}$ and $\bm{H}^*$ smoothers are shown in Figure \ref{fig:eig-h-Epan-Gaus} across  different bandwidth values in $(0,1]$. The eigenvalues of a cubic spline smoother for different values of smoothing parameters are also presented for comparison. Epanechnikov and Gaussian kernels are used in the first three graphs in the top and bottom rows, respectively. While $\bm{S}_{NW}$ produces some negative eigenvalues for the Epanechnikov kernel, it produces only positive eigenvalues for the Gaussian kernel. Figure \ref{fig:mse-h-Epan-Gaus} depicts the mean squared error ($MSE$)
\begin{align}
  MSE &= n^{-1} \sum_{i=1}^n \{Y_i-\widehat{m}_b^*(X_i)\}^2,
\label{eqn:mse}
\end{align}
values of boosting for a different number of boosting iterations, $b=0,1\ldots,1000$, for sample size $n=500$. For different $h$ and $\lambda$ (spline smoothing) values $\{0.2,0.4,0.6,0.8,1\}$, the $MSE$ values are averaged over 50 samples. Because of the negative eigenvalues, it appears that boosting does not achieve bias reduction with smoother $\bm{S}_{NW}$ using Epanechnikov kernel. Figure \ref{fig:mse-h-Epan-Gaus} shows this with increasing $MSE$ values in the top left plot.
%
 \begin{figure*}
    \centering
     \begin{minipage}{1\textwidth}
      \centering
      \includegraphics[width=0.9\textwidth,height=0.175\textheight]{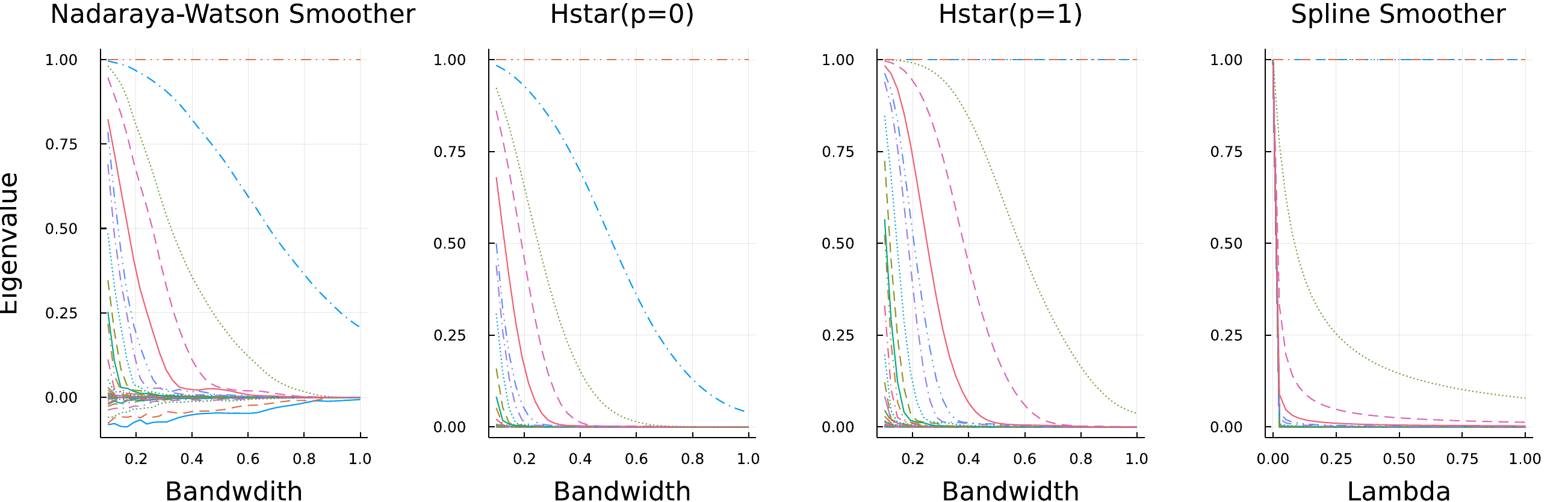}
      \end{minipage}
      \hfill
      \begin{minipage}{1\textwidth}
        \centering
        \includegraphics[width=0.9\textwidth,height=0.175\textheight]{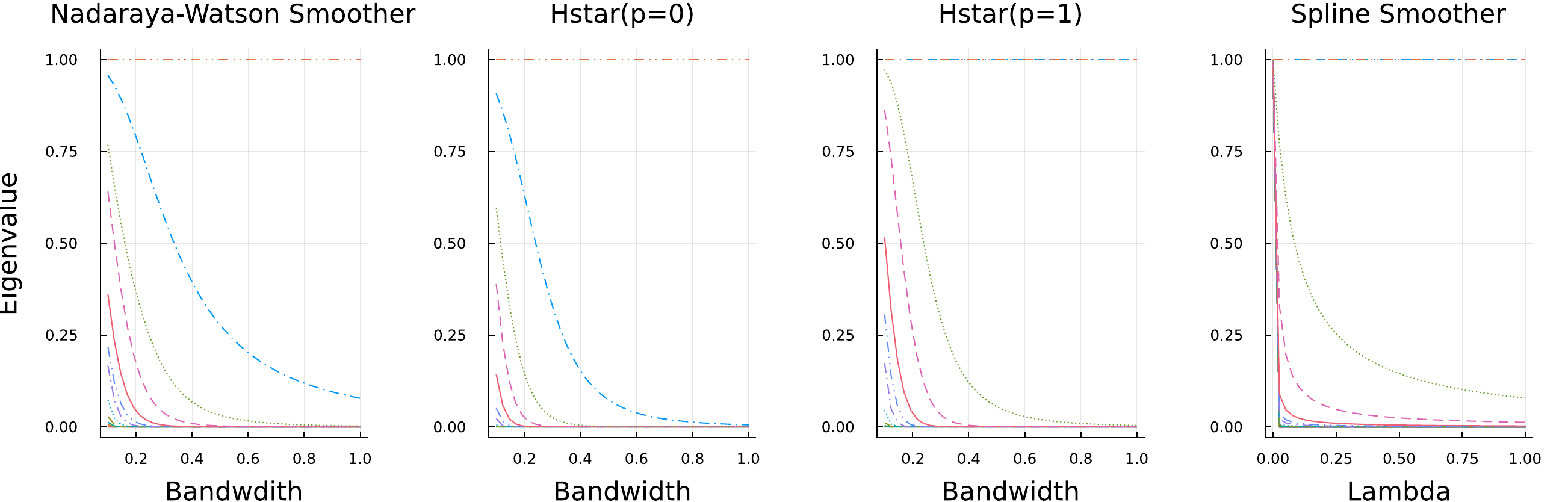}    
      \end{minipage}
    \caption{ Eigenvalues for (left) Nadaraya-Watson and (middle) $\bm{H}^*$ smoothers, and (right) cubic spline smoother. Epanechnikov kernel: top row,  Gaussian kernel: bottom row.  Simulation design is based on the Example 1 in Section \ref{subsec:numeric} with $n=20$}.
    \label{fig:eig-h-Epan-Gaus}
  \end{figure*}

According to \citet{buhlmann2003boosting}, the $L_2$ boost estimate at iteration $b$ of the Algorithm (I) is defined as 
\begin{align}
  \widehat{\bm{m}}_b^* &= \sum_{j=0}^b \bm{H}^*(\bm{I}-\bm{H}^*)^j\bm{y}= \mathcal{S}_b \bm{y}, \label{eqn:mbstar}
\end{align}
where $\mathcal{S}_b = \bm{I}-(\bm{I}-\bm{H}^*)^{b+1}$. The boosting update $\widehat{\bm{m}}^*_b:= \widehat{\bm{m}}^*_{b-1} + \bm{H}^*(\bm{y}-\widehat{\bm{m}}^*_{b-1})$ is expressed as
\begin{align}
  \mathcal{S}_b \bm{y} =  \mathcal{S}_{b-1} \bm{y} + \bm{H}^* (\bm{I}-\mathcal{S}_{b-1}) \bm{y}.
  \label{eqn:bupdate}  
\end{align}
 \begin{figure*}[h]
  \centering
 \begin{minipage}{1\textwidth}
   \centering
   \includegraphics[width=0.9\textwidth,height=0.175\textheight]{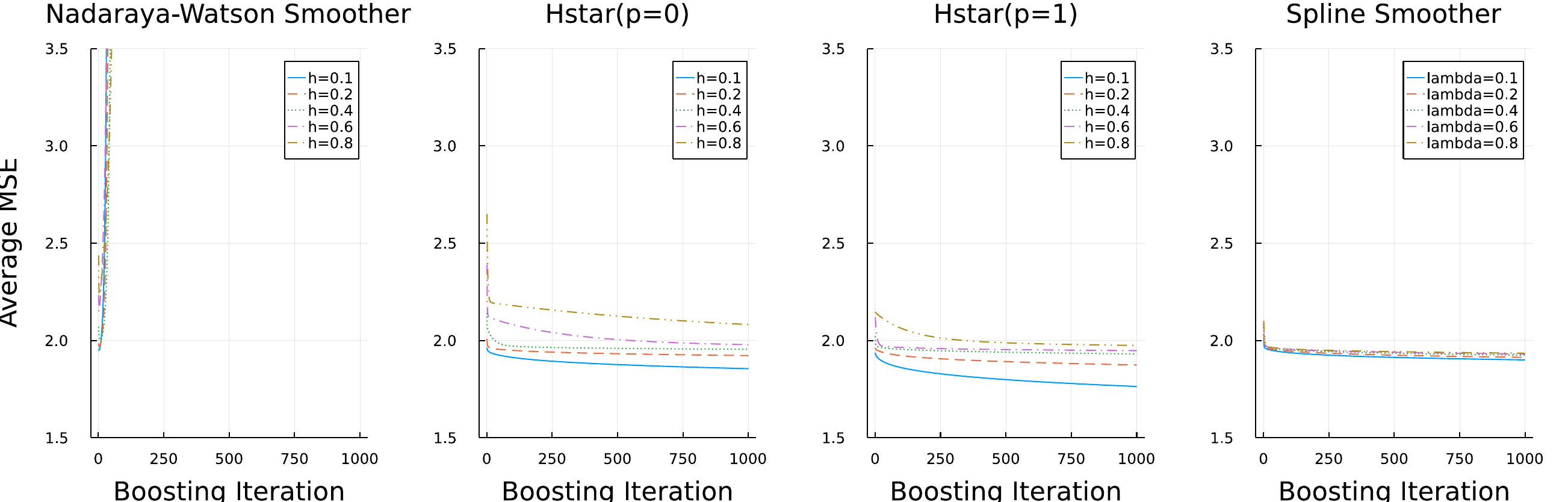}
  \end{minipage}
  \hfill
  \begin{minipage}{1\textwidth}
  \centering
  \includegraphics[width=0.9\textwidth,height=0.175\textheight]{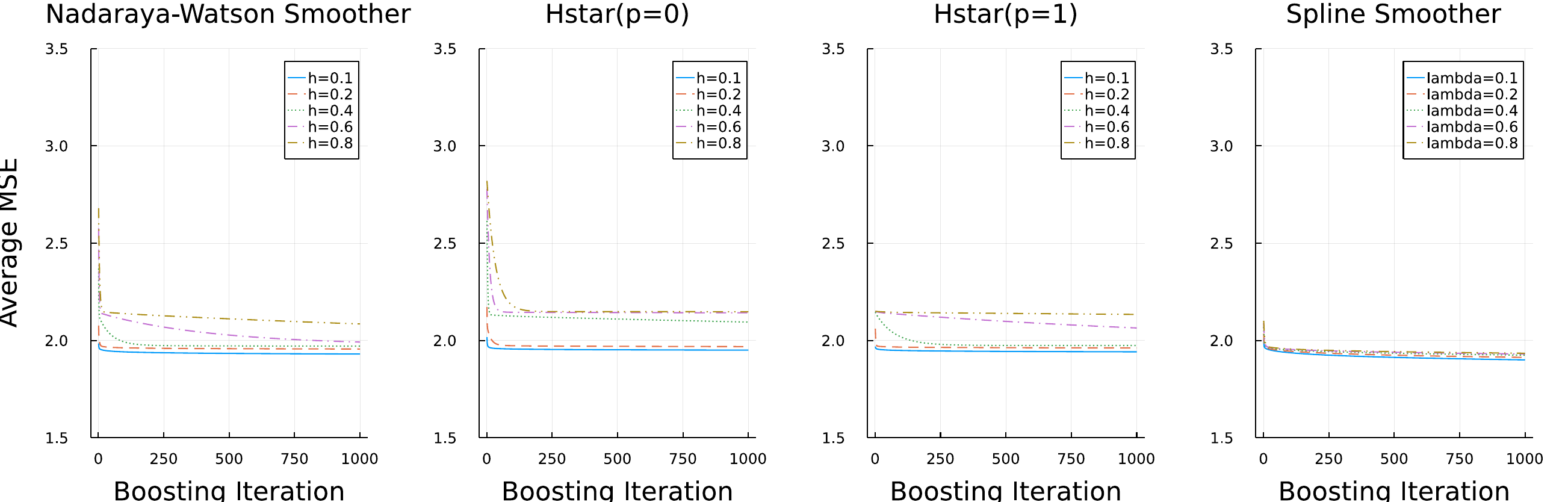}
  \end{minipage}
  \caption{Average of mean squared error (MSE) (\ref{eqn:mse}) values across 50 samples of size $n=500$ for (left) Nadaraya-Watson and (middle) $\bm{H}^*$ smoothers, and (right) cubic spline smoothers. Epanechnikov kernel (top) and  Gaussian kernel ((bottom). Simulation design is based on Model (M1) in Section \ref{subsec:numeric} with $n=500$. }
  \label{fig:mse-h-Epan-Gaus}
\end{figure*}

\section{Contributions to Boosting} \label{sec:theory}

For theoretical examination of the boosting estimator, $\widehat{\bm{m}}_b^*$, $b=0,1,\ldots$, we give the following assumptions.  In the following, the order of the local polynomial $p=0$ represents the results for $\bm{H}_0^*$ smoother and $p=1$ for $\bm{H}_1^*$ smoother.

\begin{assumption}\label{as:1-x}  The covariate 
	density $f(\cdot)$ of $X$  has at least two continuous derivatives on  a compact support, $\mathcal{I}=[0, 1]$, and $\underset{x \in \mathcal{I}}{\inf} f(x) > 0$. 
\end{assumption}
\begin{assumption} \label{as:2-ker-h}  The kernel $K(\cdot)$ is a bounded
	symmetric density function with bounded support $[-1,1]$ and satisfies Lipschitz
	condition. The bandwidth $h \rightarrow 0$   and $n h / (\ln n)^2 \rightarrow \infty$ as $n \rightarrow \infty$.
\end{assumption}
The assumption \ref{as:2-ker-h} is common in the nonparametric smoothing literature. It is a mild condition on the kernel function, for example, please refer to \citet{hua14local} and  \citet{park2009}. Assumptions \ref{as:1-x} is a mild condition on the design density which is required for the projection-based estimators to achieve bias reduction, please see \citet{hua08analysis} and  \citet{hua14local}.

Let $0 < \lambda_k \le 1$, $k=1,\ldots,n$, be the eigenvalues of the smoother $\bm{H}^*$. For our results, we need the asymptotic orders of the eigenvalues of $\bm{H}^*$ which are presented in  Theorem 4 of \citet{hua14local}.  For convenience, we summarize their results in the following corollary. In the following, $p=0$ give results for the eigenvalues $\{\lambda_k\}$ from $\bm{H}_0^*$ matrix in (\ref{eqn:back-fit-lc}) and $p=1$ give results for the eigenvalues $\{\lambda_k\}$ from $\bm{H}_1^*$ matrix in (\ref{eqn:back-hfit}). 
\begin{corollary}[Theorem 4 of \citet{hua14local}]\label{prop:eigen}
    Suppose the Assumptions \ref{as:1-x} and \ref{as:2-ker-h} hold. Then conditioned on $\bm{x}=(X_1,\ldots,X_n)^T$, we have the following results hold for $1 \ge \lambda_{1} \ge \lambda_{2} \ge \ldots \ge \lambda_n >0$  and for $p=0, 1$. 
    \begin{enumerate}[label=(\alph*)]
        \item   Suppose $k \rightarrow \infty$ and $kh \rightarrow 0$. The eigenvalues $\lambda_k=1-(kh)^{2(p+1)}M_k$, for some non negative constant $M_k$ which takes zero when $\lambda_k=1$ and takes some positive value when $0<\lambda_k<1$, and for $k= 1, 2, \ldots$.
        \item Suppose $kh \rightarrow C$ for some constant $C$ or $kh \rightarrow \infty$ as $k \rightarrow \infty$. Then $(kh)^{2(p+1)}M_k \rightarrow 1$ and eigenvalues $\lambda_k \rightarrow 0$.
        \item  The corresponding eigenvectors for $0<\lambda_k<1$ are asymptotically the trigonometric polynomials $\cos(2k\pi \bm{x})$ and $\sin(2k\pi \bm{x})$.
    \end{enumerate} 
\end{corollary}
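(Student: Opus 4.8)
The plan is to convert the discrete $n\times n$ eigenvalue problem for $\bm{H}^*$ into the spectral analysis of a limiting integral operator, and then to exploit the near translation invariance of that operator on the interior of $[0,1]$ to identify its eigenpairs with a Fourier basis; this is essentially the route taken in Theorem~4 of \citet{hua14local}. The first ingredient is an \emph{equivalent-kernel representation}: conditionally on $\bm{x}=(X_1,\dots,X_n)^T$ and uniformly in $i,j$, the entry $w_j(X_i)$ appearing in (\ref{eqn:back-hdef1}) equals $n^{-1}\bar L_h(X_i,X_j)\,(1+o_p(1))$ for a symmetric kernel $\bar L_h$ of bandwidth $h$ normalized so that $\int\bar L_h(u,v)\,dv=1$. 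This follows from the standard local-polynomial expansion of $(\bm{X}_x^{T}\bm{W}_x\bm{X}_x)^{-1}$ under Assumptions \ref{as:1-x}--\ref{as:2-ker-h} together with a uniform law of large numbers for the design; away from the boundary $\bar L_h(u,v)$ depends to leading order only on $(u-v)/h$, and it carries the bias-reduction structure of $\bm{H}^*$ --- a twice-smoothed kernel for $\bm{H}_0^*$ and an order-four equivalent kernel for $\bm{H}_1^*$ --- consistent with (\ref{eqn:back-fit}), (\ref{eqn:back-fit-lc}) and the bias orders $h^2$ and $h^4$ recorded in Section~\ref{sec:back}.

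Next I would \emph{pass to the limiting operator}. Conditionally on the design, $\bm{H}^*$ acts on $(g(X_1),\dots,g(X_n))^T$ approximately as the integral operator $(\mathcal{H}g)(u)=\int\bar L_h(u,v)g(v)f(v)\,dv$ on $L^2([0,1],f)$, and a Hilbert--Schmidt perturbation argument, using concentration of the empirical design measure around $f$, shows that the eigenvalues of $\bm{H}^*$ track those of $\mathcal{H}$ up to lower-order errors. On the interior band $[2h,1-2h]$ the boundary correction is inactive, so there $\mathcal{H}$ is, to leading order, a convolution operator; its approximate eigenfunctions are the complex exponentials $e^{2\pi\mathrm{i}ku}$, equivalently the real pairs $\cos(2k\pi\bm{x})$ and $\sin(2k\pi\bm{x})$, while the $O(h)$-width boundary layers perturb the leading spectral asymptotics only at lower order. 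This establishes part~(c).

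The eigenvalue attached to frequency $k$ is then the symbol $\widehat{\bar L}(\omega)$ of this convolution evaluated at $\omega=2\pi kh$. Since $\bar L$ is an order-$2(p+1)$ equivalent kernel, a Taylor expansion at $\omega=0$ gives $\widehat{\bar L}(\omega)=1-c\,\omega^{2(p+1)}+o\big(\omega^{2(p+1)}\big)$ as $\omega\to0$, with $c>0$ forced by the positivity and shrinking property of $\bm{H}^*$ established in \citet{hua08analysis,hua14local}; writing $\omega=2\pi kh$ this becomes $\lambda_k=1-(kh)^{2(p+1)}M_k$ with $M_k$ bounded away from $0$ and $\infty$ whenever $kh\to0$, where $p=0$ gives $\bm{H}_0^*$ and $p=1$ gives $\bm{H}_1^*$ --- part~(a). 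When instead $kh$ stays bounded away from $0$ (the regimes $kh\to C>0$ and $kh\to\infty$), the symbol $\widehat{\bar L}(\omega)$ is bounded away from $1$ and tends to $0$ as $|\omega|\to\infty$ by Riemann--Lebesgue; combined with the fact that $\bm{H}^*$ has only $O(h^{-1})$ eigenvalues that do not vanish asymptotically --- so the $k$-th largest already lies in the decaying tail once $k$ is of order $h^{-1}$ --- this gives $\lambda_k\to0$, equivalently $(kh)^{2(p+1)}M_k\to1$ --- part~(b).

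The step I expect to be the main obstacle is the matrix-to-operator passage with the boundary corrections in place: one must show that the $O(h)$-width boundary layers and the $O_p(n^{-1}h^{-1})$ discretization error in $w_j(X_i)$ perturb the relevant eigenvalues by strictly smaller order than $(kh)^{2(p+1)}$, uniformly over the range of $k$ under consideration. This is precisely where the extra smoothness imposed in Assumptions \ref{as:1-x} and \ref{as:m-4th} and the bandwidth rate $nh/(\ln n)^2\to\infty$ in \ref{as:2-ker-h} enter, providing the uniform-in-$x$ control of the local-polynomial remainders and of the deviation of the empirical design measure from $f$ that the above heuristics require.
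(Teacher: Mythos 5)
The paper does not prove this corollary from first principles at all: its entire proof is the observation that the eigenvalues lie in $(0,1]$ (by the symmetry, positive definiteness, and shrinking properties of $\bm{H}^*$ established in \citet{hua08analysis,hua14local}) together with a direct appeal to Theorem~4 of \citet{hua14local}, which already contains parts (a)--(c). Your proposal instead attempts to reconstruct that cited theorem: equivalent-kernel representation of $w_j(X_i)$, passage to a limiting integral operator, identification of the interior operator with a convolution whose approximate eigenfunctions are $\cos(2k\pi\bm{x})$ and $\sin(2k\pi\bm{x})$, and a symbol expansion $\widehat{\bar L}(\omega)=1-c\,\omega^{2(p+1)}+o(\omega^{2(p+1)})$ giving the $1-(kh)^{2(p+1)}M_k$ form. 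This is the standard and essentially correct route to results of this type, and as a self-contained argument it is more informative than the paper's one-line citation; but it is also doing far more work than the statement requires, and several of its steps (the Hilbert--Schmidt perturbation bound, the uniform control of the $O(h)$ boundary layers and the $O_p(n^{-1}h^{-1})$ discretization error) are asserted rather than established --- you flag this yourself as the main obstacle.

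Beyond the acknowledged incompleteness, there is one concrete gap in part (b). In the regime $kh\rightarrow C$ for a fixed constant $C>0$, your symbol picture predicts $\lambda_k\rightarrow\widehat{\bar L}(2\pi C)$, which for a generic equivalent kernel is a number strictly between $0$ and $1$, not $0$; Riemann--Lebesgue only gives decay of the symbol as $|\omega|\rightarrow\infty$, i.e.\ it covers the regime $kh\rightarrow\infty$. The auxiliary counting argument you invoke (``$\bm{H}^*$ has only $O(h^{-1})$ eigenvalues that do not vanish asymptotically'') can be made rigorous via $\mathrm{tr}(\bm{H}^*)=\sum_k\lambda_k=O(h^{-1})$ together with $\lambda_k\in(0,1]$, but that too only forces $\lambda_k\rightarrow 0$ when $k h\rightarrow\infty$, not when $kh$ stays bounded. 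So your tools do not deliver the $kh\rightarrow C$ case of part (b) as stated; the paper avoids this entirely by importing the conclusion from Theorem~4 of \citet{hua14local}. If you want a self-contained proof, this is the case you would need to handle separately (or restrict to $kh\rightarrow\infty$).
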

 According to Theorem 4 in \citet{hua14local} and Figure \ref{fig:eig-h-Epan-Gaus}, some eigenvalues have value 1 and the number of eigenvalues with value 1 is at least $p+1$, for $p=0,1$.  Parts (a) and (b) of the Corollary \ref{prop:eigen} show that  the eigenvalues $0<\lambda_k<1$ converge to 1 for $k$ values that fulfill $kh \rightarrow 0$,  whereas the eigenvalues $\lambda_k$ converge to zero for $kh \rightarrow C$ for constant $C$ or $kh \rightarrow \infty$.


We now derive the asymptotic properties of the proposed boosting estimate in (\ref{eqn:bupdate}). We obtain the orthonormal decomposition of $\mathcal{S}_b$ in (\ref{eqn:bupdate}) as described in Proposition 2 of \citet{buhlmann2003boosting},
\begin{align*}
    \mathcal{S}_b  = \mathcal{U} \mathcal{D}_b \mathcal{U}^T, \qquad   \mathcal{D}_b=\text{diag}\{1-(1-\lambda_k)^{b+1}\},
\end{align*}
where $\mathcal{U}$ includes the orthonormal eigenvectors of $\bm{H}^*$. Denote the true regression function as $\bm{m}$$=(m(X_1),\ldots,m(X_n))^T$, and define $\bm{\gamma}=(\gamma_1,\ldots,\gamma_n)^T= \mathcal{U}^T\bm{m}$.
According to \citet{buhlmann2003boosting} and Corollary \ref{prop:eigen},  the conditional squared bias and the conditional variance for $L_2$ boost estimate are computed as 
\begin{align}
  \text{bias}^2(b;\widehat{m}_b^*) &= n^{-1}\sum_{k=1}^n \left(E[\widehat{m}_b^*(X_k)| X_1,\ldots,X_n] -m(X_k)  \right)^2 \nonumber \\ &= n^{-1} \sum_{k=1}^n \gamma_k^2 \left\{1-\lambda_k \right\}^{2(b+1)} \nonumber \\
            &= n^{-1} \sum_{k=1}^n \gamma_k^2 \left\{(kh)^{2(p+1)} M_k\right\}^{2(b+1)}, \label{eqn:full-bias}
\end{align}
and
\begin{align}
  \text{var}(b;\widehat{m}_b^*) &= \sigma^2 n^{-1} \sum_{k=1}^n \text{Var}(\widehat{m}_b^*(X_k)| X_1,\ldots,X_n) \nonumber\\ &=\sigma^2 n^{-1} \sum_{k=1}^n \left\{ 1- [1- \lambda_k]^{b+1} \right\}^2 \nonumber\\
            &= \sigma^2 n^{-1}  \sum_{k=1}^n \left\{ 1- [(kh)^{2(p+1)} M_k]^{b+1} \right\}^2. \label{eqn:full-var}
\end{align}
\subsection{Boosting with a low-rank smoother} \label{sec:low-rank}
We now consider a low-rank smoother. Let $\bm{H}^*(d)$ be a rank $d_n (<n)$  smoother of $\bm{H}^*$. Without loss of generality, we assume that the eigenvalues $\lambda_k$'s are sorted in non-increasing order ($1 \ge \lambda_{1} \ge \lambda_{2} \ge \ldots \ge \lambda_{d_n} \ge \cdots \ge \lambda_n >0$). Therefore, 
\begin{align*}
  \mathcal{S}_b(d) &=  \mathcal{U}(d)_{n \times d_n} \mathcal{D}_b(d) \mathcal{U}^T(d)_{d_n \times n}, \qquad   \mathcal{D}_b(d)=\text{diag}\{1-(1-\lambda_k)^{b+1}\}_{k=1}^{d_n},  
\end{align*}
 is the corresponding low-rank smoother for $\mathcal{S}_b$. Suppose $\widehat{m}_b^*(d)$ is the boosting estimate at the $b$th iteration. The following theorem provides an upper bound for the low-rank, $d_n$, and expressions for the approximation error of the smoother, $\mathcal{S}_b(d)$,  and the squared bias and variance of the boosting estimator $\widehat{m}_b^*(d)$. 
\begin{theorem} \label{thm:eigen-bias}
    Suppose the Assumptions \ref{as:1-x} and \ref{as:2-ker-h} hold. Then conditioned on $\{X_1,\ldots,X_n\}$, the following results hold for $p=0,1$.
    \begin{enumerate}[label=(\alph*)]
      \item The low-rank $d_n$ of a smoother $\mathcal{S}_b(d)$ is bounded above by $O(h^{-1})$.
      \item The approximation error of the low-rank smoother, $\mathcal{S}_b(d)$, which is defined as 
      \begin{align*}
         \Vert \mathcal{S}_b- \mathcal{S}_b(d) \Vert_F^2 &= \sum_{k=d_n}^n \{1- [(kh)^{2(p+1)}M_k]^{b+1}\}
      \end{align*}
      where $\Vert \cdot \Vert_F$ is the Frobenius norm, goes to zero when $d_n=O(h^{-1})$.
        \item The squared bias and variance for the low-rank $L_2$ boost estimate $\widehat{m}_b^*(d)$ are 
        \begin{align}
            \text{bias}^2\{b;\widehat{m}_b^*(d)\} &=  n^{-1} \sum_{k=1}^{d_n} \gamma_k^2 \left\{(kh)^{2(p+1)} M_k\right\}^{2(b+1)},
            \label{eqn:low-bias}
        \end{align}
        \begin{align}
            \text{var}\{b;\widehat{m}_b^*(d)\} &=  \sigma^2 n^{-1} \sum_{k=1}^{d_n} \left\{ 1- [(kh)^{2(p+1)} M_k]^{b+1} \right\}^2. \label{eqn:low-var}
        \end{align}
      \end{enumerate}
\end{theorem}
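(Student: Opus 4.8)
The plan is to reduce all three parts to the spectral representation $\mathcal{S}_b = \mathcal{U}\mathcal{D}_b\mathcal{U}^T$ with $\mathcal{D}_b = \mathrm{diag}\{1-(1-\lambda_k)^{b+1}\}$ recorded above, together with the eigenvalue asymptotics of Corollary \ref{prop:eigen}. Write $u_1,\ldots,u_n$ for the orthonormal eigenvectors of $\bm{H}^*$ (the columns of $\mathcal{U}$), so that $\mathcal{S}_b = \sum_{k=1}^n \{1-(1-\lambda_k)^{b+1}\}\,u_k u_k^T$ and $\mathcal{S}_b(d) = \sum_{k=1}^{d(n)}\{1-(1-\lambda_k)^{b+1}\}\,u_k u_k^T$. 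For part (a), the observation is that only eigen-directions with $\lambda_k$ bounded away from $0$ contribute non-negligibly to $\mathcal{S}_b$: by Corollary \ref{prop:eigen}(b), for each $\epsilon>0$ there is a constant $C_\epsilon$ with $\lambda_k<\epsilon$ (hence $1-(1-\lambda_k)^{b+1}$ small) whenever $k\ge C_\epsilon h^{-1}$ and $n$ is large, so the number of directions one must retain is at most $C_\epsilon h^{-1}=O(h^{-1})$, giving $d(n)=O(h^{-1})$. Equivalently one can argue from the trace: since the eigenvalues lie in $(0,1]$, $\#\{k:\lambda_k>\tau\}\le \tau^{-1}\sum_k\lambda_k=\tau^{-1}\,\mathrm{tr}(\bm{H}^*)$, and $\mathrm{tr}(\bm{H}^*)=\sum_i w_i(X_i)=O(h^{-1})$ for the local smoother.

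For part (b), orthonormality of the $u_k$ gives $\mathcal{S}_b-\mathcal{S}_b(d)=\sum_{k>d(n)}\{1-(1-\lambda_k)^{b+1}\}\,u_ku_k^T$, hence $\Vert\mathcal{S}_b-\mathcal{S}_b(d)\Vert_F^2=\sum_{k>d(n)}\{1-(1-\lambda_k)^{b+1}\}^2$; since $\lambda_k\in(0,1]$ each factor lies in $[0,1]$, so this is bounded by $\sum_{k\ge d(n)}\{1-(1-\lambda_k)^{b+1}\}$, and substituting $1-\lambda_k=(kh)^{2(p+1)}M_k$ from Corollary \ref{prop:eigen} produces the displayed expression. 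To show it vanishes for $d(n)=O(h^{-1})$, take $d(n)$ of exact order $h^{-1}$; then every $k\ge d(n)$ has $kh$ bounded below, so Corollary \ref{prop:eigen}(b) forces $(kh)^{2(p+1)}M_k\to1$ and every summand tends to $0$. Upgrading this termwise vanishing to vanishing of the whole tail, which has order $n$ terms, is the delicate point and I expect it to be the main obstacle: it needs the genuine decay rate of $\lambda_k$ for $k$ large relative to $h^{-1}$ — available from Corollary \ref{prop:eigen}(a)--(b) and from $\Vert\bm{H}^*\Vert_F^2=\sum_{i,j}w_j^2(X_i)=O(h^{-1})$, i.e. $\sum_k\lambda_k^2=O(h^{-1})$ — rather than a crude term-count bound.

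For part (c), condition on $\bm{x}$ and write $\widehat{\bm{m}}_b^*(d)=\mathcal{S}_b(d)\bm{y}=\mathcal{S}_b(d)\bm{m}+\mathcal{S}_b(d)\bm{\epsilon}$. The variance follows immediately from orthonormality: $n\,\mathrm{var}\{b;\widehat{m}_b^*(d)\}=\sigma^2\,\mathrm{tr}\{\mathcal{S}_b(d)\mathcal{S}_b(d)^T\}=\sigma^2\sum_{k=1}^{d(n)}\{1-(1-\lambda_k)^{b+1}\}^2$, which is (\ref{eqn:low-var}) after inserting $1-\lambda_k=(kh)^{2(p+1)}M_k$. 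For the bias, $n\,\mathrm{bias}^2=\Vert(\bm{I}-\mathcal{S}_b(d))\bm{m}\Vert^2$; in the eigenbasis $\bm{I}-\mathcal{S}_b(d)$ multiplies $\gamma_k$ by $(1-\lambda_k)^{b+1}$ for $k\le d(n)$ and by $1$ for $k>d(n)$, so $n\,\mathrm{bias}^2=\sum_{k=1}^{d(n)}(1-\lambda_k)^{2(b+1)}\gamma_k^2+\sum_{k>d(n)}\gamma_k^2$, and substituting $1-\lambda_k=(kh)^{2(p+1)}M_k$ yields (\ref{eqn:low-bias}) plus the truncation remainder $n^{-1}\sum_{k>d(n)}\gamma_k^2$. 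Under Assumption \ref{as:m-4th} the coefficients $\gamma_k=(\mathcal{U}^T\bm{m})_k$ of the smooth function $m$ in the trigonometric eigenbasis of Corollary \ref{prop:eigen}(c) decay fast, so for $d(n)=O(h^{-1})\to\infty$ this remainder is of smaller order; equivalently, for $k>d(n)$ one has $(1-\lambda_k)^{b+1}\to1$, so the omitted term is the natural continuation of the sum in (\ref{eqn:low-bias}). This completes the derivation, the only bookkeeping requiring care being the bias truncation just described.
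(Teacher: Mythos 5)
Your proposal follows the same route as the paper's own proof: every part is read off the spectral decomposition $\mathcal{S}_b=\mathcal{U}\mathcal{D}_b\mathcal{U}^T$ combined with the eigenvalue asymptotics of Corollary \ref{prop:eigen}, and your parts (a) and (c) coincide with the paper's argument, which simply cites Corollary \ref{prop:eigen}(b) for (a) and equations (\ref{eqn:full-bias})--(\ref{eqn:full-var}) for (c). Where you differ is in being more careful, and the ``delicate points'' you flag are genuine issues that the paper's terse proof does not address rather than problems with your argument. First, the Frobenius norm gives $\Vert\mathcal{S}_b-\mathcal{S}_b(d)\Vert_F^2=\sum_{k>d(n)}\{1-(1-\lambda_k)^{b+1}\}^2$ with a square, whereas the theorem displays the unsquared sum; your bound $x^2\le x$ for $x\in[0,1]$ reconciles the two, but the paper asserts the unsquared formula without comment. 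Second, you are right that the termwise limit $(kh)^{2(p+1)}M_k\to 1$ does not by itself force a tail of order $n$ terms to vanish; the paper's one-line justification stops exactly there, and a quantitative handle such as your $\sum_k\lambda_k^2=\Vert\bm{H}^*\Vert_F^2=O(h^{-1})$, or an explicit decay rate for $\lambda_k$ beyond $k\asymp h^{-1}$, is genuinely needed to close this. Third, your observation that the exact bias of the truncated estimator carries the remainder $n^{-1}\sum_{k>d(n)}\gamma_k^2$, which (\ref{eqn:low-bias}) omits and which must be controlled through the decay of the $\gamma_k$ under Assumption \ref{as:m-4th}, is likewise passed over by the paper, which treats (c) as an immediate consequence of the full-rank formulas. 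In short: same approach as the paper, executed more rigorously; the gaps you identify are in the paper's proof, not yours, and your sketched remedies (the trace bound for (a), the Frobenius-norm bound for (b), the eigenbasis decay of $\gamma_k$ for (c)) are the right ingredients to fill them.
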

\begin{proof}
  Without loss of generality, we assume that the eigenvalues $\lambda_k$'s are sorted in non-increasing order, $1 \ge \lambda_{1} \ge \lambda_{2} \ge \ldots \ge \lambda_{d_n} \ge \cdots \ge \lambda_n >0$.
  \begin{enumerate}[label=(\alph*)]
    \item From Corollary \ref{prop:eigen}, we know that $\lambda_k \rightarrow 0$ for $kh \rightarrow C$ for some constant $C$ or $kh \rightarrow \infty$. Therefore, for all $k > d_n=O(h^{-1})$, $\lambda_k \rightarrow 0$. Hence, $d_n$ is bounded above by $O(h^{-1})$.
    \item The result follows from the singular value decomposition of the smoother $\mathcal{S}_b$. Again from Corollary \ref{prop:eigen}, for all   $k > d_n=O(h^{-1})$, $ (kh)^{2(p+1)} M_k \rightarrow 1$. Hence the approximation error goes to zero.
    \item The result follows from (\ref{eqn:full-bias}) and (\ref{eqn:full-var}).
  \end{enumerate}
\end{proof}

To the best of our knowledge, the result in Theorem \ref{thm:eigen-bias} is new to the literature since it offers  the bias and variance of a boosting estimate using a low-rank smoother. We have the following remarks.
\begin{itemize}
    \item The bias and variance expressions in (\ref{eqn:low-bias}) and (\ref{eqn:low-var}) differ from the standard bias expressions for the kernel smoothing estimators (as shown below), which involve derivatives of the true function $m(\cdot)$, where as the bias expression in (\ref{eqn:low-bias}) includes the Fourier coefficients $\gamma_k$ of the true function $m(\cdot)$. 
    \item While Theorem \ref{thm:eigen-bias} shares similarities with  Proposition 3 in \citet{buhlmann2003boosting}, we may not provide the exponential bias and variance trade-off result as they did in their Theorems 1 and 2.  \citet{buhlmann2003boosting} uses the boosting iteration number $b$ as a regularization parameter while keeping the penalty parameter fixed. However, similar to \citet{park2009}, we treat the bandwidth $h$ as a regularization parameter and fix the iteration number $b$. As a result, our setting does not support the exponential bias and  variance trade-off which requires that the number of boosting iterations, $b$,  goes to infinity.
 \end{itemize}

 For the initial estimator $\widehat{m}_0^*$ in the boosting algorithm (I) for $p=1$, \citet{hua08analysis} provided the following asymptotic bias and variance expressions:
\begin{align}
    \text{bias}(\widehat{m}_0^*(X_i)) &= h^4\left( \frac{\mu_2^2-\mu_4}{4} \right) \bigg\{ m^{(4)}(X_i)  + 2m^{(3)}(X_i) \frac{f^{(1)}(X_i)}{f(X_i)} \nonumber \\
    & \qquad  + m^{(2)}(X_i) \frac{f^{(2)}(X_i)}{f(X_i)}\bigg\} + o_p(h^4), \text{ and }  \label{eqn:h0-bias} \\
    \text{var}(\widehat{m}_0^*(X_i)) &= n^{-1}h^{-1} \kappa_0 \sigma^2/f(X_i)(1+o_p(1)), \nonumber
\end{align}
where $\kappa_0$ involves  convolutions of the kernel function. Now, by proceeding similar to Theorem 2 in \citet{park2009}, we can observe that  $L_2$ boosting improves asymptotic bias by $4$ orders of magnitude at each iteration if the true function $m(\cdot)$ and density $f(\cdot)$ are sufficiently smooth. At the same time variance retains the same asymptotic order. This result is consistent with the findings of \citet{di2008boosting} and \citet{park2009} where bias improves by $2$ orders of magnitude because they only considered $p=0$.

We demonstrate the findings of Theorem \ref{thm:eigen-bias} using simulated data from Section \ref{sec:sim}. Figure \ref{fig:rrproj} shows the average $MSE$ values across 25 training samples and the average of $MSE$ values on a test data for the model (M1) in Section \ref{subsec:numeric} using the low-rank smoother $\bm{H}^*_0(d)$ (local constant). For bandwidths $\{ 0.1, 0.2, 0.4, 0.6, 0.8\}$, the model is fitted using smoother with ranks $ d_n= 2, 5, 10$, and $50$. The average $MSE$ values for $d_n=10$ and $d_n=50$ are similar, showing that the low-rank smoother ($d_n=10$) approximates the full-rank smoother ($d_n=50$). Furthermore, in terms of prediction error on test data, it appears that the low-rank smoother with $d_n=5$ outperforms the full-rank smoother. This is investigated further in Table  \ref{tab:ex1-rr} of Section \ref{subsec:numeric}.
\begin{figure}
  \centering
  \begin{minipage}{0.5\textheight}
  \centering
    \includegraphics[width=1.2\textwidth,height=0.2\textheight]{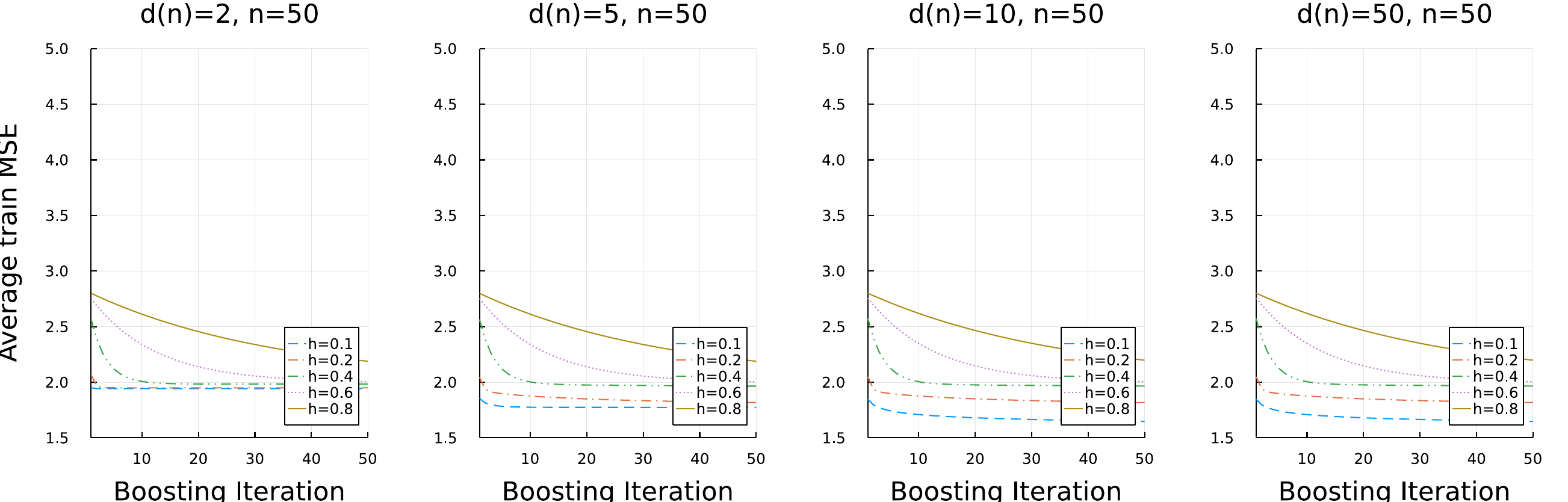}
  \end{minipage}
  \begin{minipage}{0.5\textheight}
    \centering
    \includegraphics[width=1.2\textwidth,height=0.2\textheight]{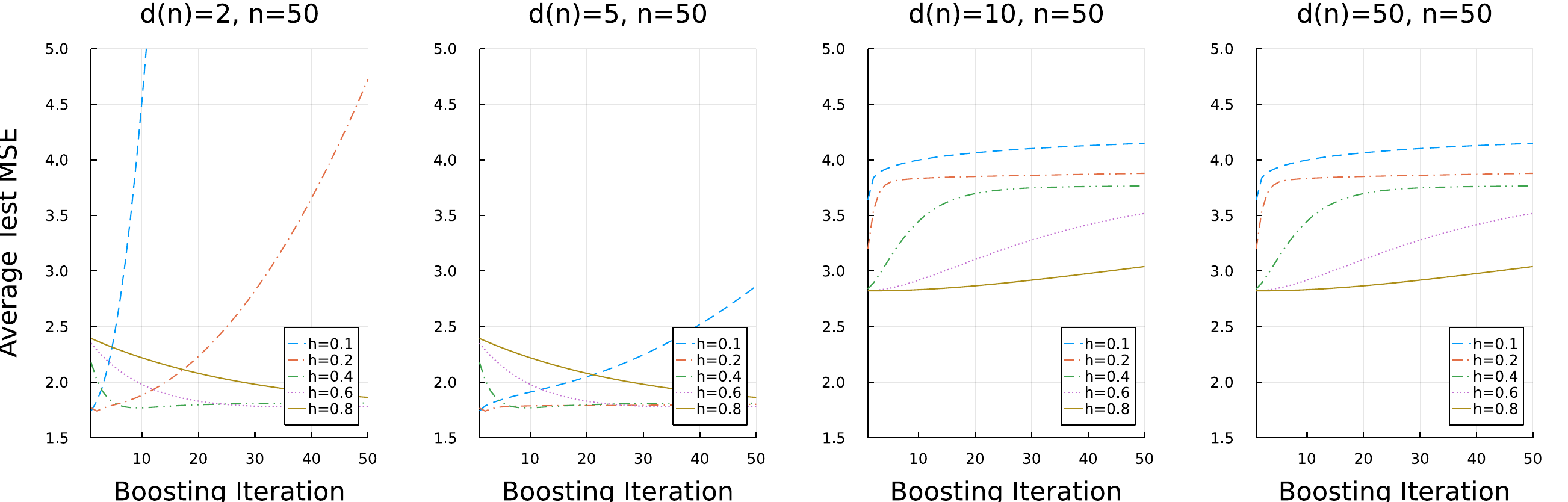}
  \end{minipage}
  \caption{Average of training mean squared error ($MSE$) values (top row) and test $MSE$ values (bottom row) across 25 samples of size $n=50$ for model (M1) in Section \ref{subsec:numeric} using $\bm{H}_0^*$ (local constant) smoother. The first three plots use low-rank $(d_n < n)$ smoother. For the plots in bottom row the same test data is used for all the 25 training samples.  }
  \label{fig:rrproj}
\end{figure}
%

%
We now demonstrate that the $L_2$ boost estimate, $\widehat{m}_b^*(d)$, of a low-rank smoother, $\mathcal{S}_b(d)$, achieves the optimal rate of convergence. First, we define the following Sobolev space of $\pi$th-order smoothness, $\pi \in \mathbb{N}$,
\begin{align}
    \mathcal{F} (\pi) &= \bigg\{ g:  g \text{ is } (\pi-1) \text{ times continuously differentiable and } \nonumber\\ & \qquad \int[g^{(\pi)}(x)]^2 dx < \infty  \bigg\}.
    \label{eq:fclass}
\end{align} 
\begin{theorem}\label{thm:min-max}
    Suppose the Assumption \ref{as:2-ker-h} hold. Assume that the true function $m(\cdot) \in \mathcal{F}(\pi)$ and density $f(\cdot) \in \mathcal{F}(\pi-2)$ for $\pi \ge 4$.  Let $b+1 > \pi/2(p+1)$  for $p=0,1$, and $d_n = O(h^{-1})$ and $d_n \rightarrow \infty$. Then, the $L_2$ boosting  estimate,  $\widehat{m}_b^*(d)$, of a low-rank smoother, $\mathcal{S}_b(d)$, achieves the optimal convergence rate, $n^{-2\pi/(2\pi+1)}$, for the bandwidth $h=O(n^{-1/(2\pi+1)})$.
\end{theorem}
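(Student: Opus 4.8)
The plan is to control the mean integrated squared error of $\widehat{m}_b^*(d)$ by bounding separately its squared bias and variance, using the explicit expressions~(\ref{eqn:low-bias}) and~(\ref{eqn:low-var}) from Theorem~\ref{thm:eigen-bias}(c), and then optimizing over the bandwidth. First I would write $\text{MISE}\{\widehat{m}_b^*(d)\} = \text{bias}^2\{b;\widehat{m}_b^*(d)\} + \text{var}\{b;\widehat{m}_b^*(d)\}$ and handle the two terms in turn. For the variance term~(\ref{eqn:low-var}), each summand is bounded by $1$, and there are $d(n) = O(h^{-1})$ of them, so $\text{var}\{b;\widehat{m}_b^*(d)\} = O\big(n^{-1} h^{-1}\big)$; with $h = O(n^{-1/(2\pi+1)})$ this gives $O\big(n^{-2\pi/(2\pi+1)}\big)$, which is the target rate. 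The nontrivial part is the bias.

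For the bias term, I would split the sum over $k = 1,\ldots,d(n)$ according to whether $kh$ is small or bounded away from zero. By Corollary~\ref{prop:eigen}(a), for $k$ with $kh \to 0$ we have $1 - \lambda_k = (kh)^{2(p+1)} M_k$ with $M_k$ bounded, so $\{1-\lambda_k\}^{2(b+1)} \le C (kh)^{4(p+1)(b+1)}$; the hypothesis $b+1 > \pi/2(p+1)$ ensures $4(p+1)(b+1) > 2\pi$, so this factor decays fast enough in $(kh)$. The Fourier coefficients $\gamma_k = (\mathcal{U}^T \bm{m})_k$ are, by Corollary~\ref{prop:eigen}(c), asymptotically the trigonometric (sine/cosine) coefficients of $m$, and the Sobolev assumption $m \in \mathcal{F}_2^\pi$ translates into $\sum_k k^{2\pi} \gamma_k^2 < \infty$. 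Writing $\gamma_k^2 (kh)^{4(p+1)(b+1)} = \big(k^{2\pi}\gamma_k^2\big)\, k^{-2\pi}(kh)^{4(p+1)(b+1)} \le \big(k^{2\pi}\gamma_k^2\big)\, h^{2\pi}$ for $kh \le 1$ (using $4(p+1)(b+1) \ge 2\pi$ so that $k^{-2\pi}(kh)^{4(p+1)(b+1)} = (kh)^{4(p+1)(b+1)-2\pi} h^{2\pi} \le h^{2\pi}$), summing over $k$ yields $n^{-1}\, h^{2\pi} \sum_k k^{2\pi}\gamma_k^2 = O\big(n^{-1} h^{2\pi}\big)$, which with the prescribed bandwidth is $O\big(n^{-1} n^{-2\pi/(2\pi+1)}\big)$, negligible compared to the variance. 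For the range where $kh$ is bounded away from zero (only relevant near $k \asymp d(n) \asymp h^{-1}$), Corollary~\ref{prop:eigen}(b) gives $\lambda_k \to 0$, so $\{1-\lambda_k\}^{2(b+1)} \to 1$ and the contribution is bounded by $n^{-1}\sum_{k: kh \gtrsim 1} \gamma_k^2 \le n^{-1} h^{2\pi}\sum_k k^{2\pi}\gamma_k^2 = O(n^{-1}h^{2\pi})$, again negligible. Combining, $\text{bias}^2\{b;\widehat{m}_b^*(d)\} = O\big(n^{-1}h^{2\pi}\big)$ and $\text{var}\{b;\widehat{m}_b^*(d)\} = O\big(n^{-1}h^{-1}\big)$, so $\text{MISE} = O\big(n^{-1}h^{-1} + n^{-1}h^{2\pi}\big)$; the variance dominates and is minimized (up to the constraint $d(n)=O(h^{-1})\to\infty$) at $h = O\big(n^{-1/(2\pi+1)}\big)$, delivering the rate $n^{-2\pi/(2\pi+1)}$.

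The main obstacle I anticipate is making rigorous the passage from "the eigenvectors are asymptotically $\cos(2k\pi x)$, $\sin(2k\pi x)$" (Corollary~\ref{prop:eigen}(c)) to a usable bound on $\sum_k k^{2\pi}\gamma_k^2$: one must argue that the empirical coefficients $\gamma_k = \mathcal{U}^T\bm{m}$ inherit, uniformly in $n$, the decay of the true Fourier coefficients of $m$ guaranteed by $m\in\mathcal{F}_2^\pi$, and that the approximation error in the eigenvectors does not accumulate across the $O(h^{-1})$ retained components. I would address this by invoking Assumption~\ref{as:1-x} (bounded-below density, so the design-dependent inner product is comparable to the $L_2(\mathcal{I})$ inner product) together with the uniform control of $M_k$ from Theorem~4 of \citet{hua14local}, and by noting that the tail $k > d(n)$ is discarded exactly where $\lambda_k\to 0$, so no bias is lost there beyond the $O(n^{-1}h^{2\pi})$ already accounted for. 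A secondary point to check is that the constant $M_k$ in Corollary~\ref{prop:eigen}(a) is bounded away from $0$ and $\infty$ uniformly over the relevant range of $k$, which is needed for the bias decay; this again follows from \citet{hua14local} under Assumptions~\ref{as:1-x}--\ref{as:m-4th}.
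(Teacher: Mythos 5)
Your overall strategy is the same as the paper's: take the squared bias and variance expressions (\ref{eqn:low-bias}) and (\ref{eqn:low-var}) from Theorem \ref{thm:eigen-bias}(c), bound the variance by the number of retained components, $d(n)=O(h^{-1})$, divided by $n$, and control the bias by playing the eigenvalue factor $\{(kh)^{2(p+1)}M_k\}^{2(b+1)}$ against the Sobolev decay of the coefficients $\gamma_k$, with the hypothesis $b+1>\pi/2(p+1)$ used exactly as in the paper to extract a factor $h^{2\pi}$ from $k^{-2\pi}(kh)^{4(p+1)(b+1)}$ over $k\le d(n)$. Your variance bound (each summand at most one) is in fact slightly more direct than the paper's, which first applies $1-(1-x)^{b+1}\le (b+1)x$; both give $O(n^{-1}h^{-1})$.

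There is, however, a genuine error in your normalization of the Sobolev condition, and it propagates. The coefficients are $\bm{\gamma}=\mathcal{U}^T\bm{m}$ with $\mathcal{U}$ orthonormal and $\bm{m}=(m(X_1),\ldots,m(X_n))^T$, so $\sum_k\gamma_k^2=\Vert\bm{m}\Vert^2\asymp n\int m^2f$: the empirical coefficients carry a factor $\sqrt{n}$ relative to the Fourier coefficients of $m$. Hence $m\in\mathcal{F}_2^{\pi}$ translates into $n^{-1}\sum_k k^{2\pi}\gamma_k^2\le M<\infty$, as the paper states, and not into $\sum_k k^{2\pi}\gamma_k^2<\infty$ as you assume. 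With the correct normalization your own computation yields $\text{bias}^2\{b;\widehat{m}_b^*(d)\}=O(h^{2\pi})$, not $O(n^{-1}h^{2\pi})$; at $h=O(n^{-1/(2\pi+1)})$ the bias is of the \emph{same} order as the variance, not negligible. This also invalidates your closing step: if the risk were really $O(n^{-1}h^{-1})+O(n^{-1}h^{2\pi})$ it would be monotone decreasing in $h$, so nothing would pin down $h=O(n^{-1/(2\pi+1)})$, and one could seemingly approach the parametric rate $n^{-1}$ by letting $h\to 0$ arbitrarily slowly --- beating the minimax rate, which should have been a warning sign. Restoring the factor of $n$ recovers the genuine balance $h^{2\pi}\asymp n^{-1}h^{-1}$, hence $h=O(n^{-1/(2\pi+1)})$ and the rate $n^{-2\pi/(2\pi+1)}$, after which your argument coincides with the paper's. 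The issues you flag at the end (uniform boundedness of $M_k$ and the passage from asymptotically trigonometric eigenvectors to coefficient decay) are real, but the paper also leaves them implicit, taking $n^{-1}\sum_k\gamma_k^2k^{2\pi}\le M$ as the operative consequence of $m\in\mathcal{F}_2^{\pi}$ ``by definition.''
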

\begin{proof}
  Without loss of generality assume that the eigenvalues are sorted in non-increasing order, $1 \ge \lambda_{1} \ge \lambda_{2} \ge \ldots \ge \lambda_{d_n} \ge \cdots \ge \lambda_n >0$.  By definition, for the true function $m \in \mathcal{F}(\pi)$
  \begin{align*}
      \frac{1}{n} \sum_{k=1}^n \gamma_k^2k^{2\pi} \le M < \infty,
  \end{align*}
  for some constant $M$. First, we bound the bias term in (\ref{eqn:low-bias}). Consider
  \begin{align*}
      \text{bias}^2(b;\widehat{m}_b^*(d)) &= \frac{1}{n} \sum_{k=1}^{d_n} \gamma_k^2 \{ (kh)^{2(p+1)}M_k\}^{2(b+1)}\\
      &=\frac{1}{n} \sum_{k=1}^{d_n} \gamma_k^2 k^{2\pi} \{ (kh)^{2(p+1)}M_k\}^{2(b+1)}k^{-2\pi}\\
      & \le \underset{k=1,\ldots,d_n}{\max} \{ (kh)^{2(p+1)}M_k\}^{2(b+1)}k^{-2\pi} \frac{1}{n}\sum_{k=1}^n \gamma_k^2 k^{2\pi}\\
      &=  \underset{k=1,\ldots,d_n}{\max} g(k) \frac{1}{n}\sum_{k=1}^n \gamma_k^2 k^{2\pi},
  \end{align*}
where $g(k)=\{ (kh)^{2(p+1)}M_k\}^{2(b+1)}k^{-2\pi}$ is an increasing function of $k$ for $b+1 > \pi/2(p+1)$. Since the number of basis functions $d_n$ is of order $O(h^{-1})$, we obtain 
\begin{align*}
  \underset{k=1,\ldots,d_n}{\max} g(k) & \le A h^{2\pi},
\end{align*}
for some constant $A$. Consequently, 
\begin{align*}
  \text{bias}^2(b;\widehat{m}_b^*(d)) &\le A M h^{2\pi},
\end{align*}
which is of order $O(h^{2\pi})$. We now consider the variance term. Since $1-(1-x)^a \le 1-[1-ax]=ax$ for any $x \in [0,1]$ and $a \ge 1$, we have
\begin{align*}
   \text{var}(b; \widehat{m}_b^*(d)) &= \frac{\sigma^2}{n} \sum_{k=1}^{d_n} \left\{ 1- [(kh)^{2(p+1)} M_k]^{b+1} \right\}^2\\
   &\le \frac{\sigma^2 (b+1)^2}{n} \sum_{k=1}^{d_n} (1- (kh)^{2(p+1)} M_k)^2\\
   &\le \frac{\sigma^2 (b+1)^2}{n} \sum_{k=1}^{d_n} (1- (kh)^{2(p+1)} M_k),
\end{align*}
where the last step follows because $\sum_{k=1}^{d_n} a_k^2 \le \sum_{k=1}^{d_n} a_k
$ for $0 < a_k < 1$. Since $d_n$ is $O(h^{-1})$, 
\begin{align*}
  \text{var}(b; \widehat{m}_b^*(d)) &= O(n^{-1}h^{-1}).
\end{align*}
Consequently,
\begin{align*}
 E[\text{MSE}(b; \widehat{m}_b^*(d))| X_1,\ldots, X_n] &\le O(h^{2\pi}) + O(n^{-1}h^{-1}),
\end{align*}
for  $b+1 > \pi/2(p+1)$. Hence the result is proved.
\end{proof}
We have the following remarks:
\begin{enumerate}[label=(\roman*)]
  \item The result in Theorem \ref{thm:min-max} is new to the kernel smoothing literature. The existing result on the optimal convergence rate by \citet{park2009} uses only a full rank smoother. Moreover, their result is for the Nadaraya-Watson smoother which is different from the projection-based smoother used in our study.
  \item The assumption such as  $m(\cdot) \in \mathcal{F}(\pi)$ and density $f(\cdot) \in \mathcal{F}(\pi-2)$ for $\pi \ge 4$ is common in the boosting literature. We refer to \citet{park2009} for similar condition (to be exact  $f(\cdot) \in \mathcal{F}(\pi-1)$) with Nadaraya-Watson smoother.
  \item  Boosting may adopt to higher order smoothness since it refits several times, as detailed in \citet{buhlmann2003boosting}. The optimal rate for $\pi=4$ is $n^{-8/9}$ for the bandwidth $n^{-1/9}$. This takes at least $1$ and $2$ iterations for local linear smoother $(p=1)$, $\bm{H}_1^*$,  and local constant smoother $(p=0)$, $\bm{H}_0^*$, respectively.   
\end{enumerate}

%
%
\subsection{Optimal Bandwidth and the number of iterations}\label{subsec:tuning} 
Despite the fact that boosting is resistant to overfitting \citep{buhlmann2003boosting}, selecting the optimal number of boosting iterations $(b)$  to avoid overfitting is crucial. It should also be noted that the learner's bandwidth $(h)$ influences whether the smoother is a strong or a weak learner. Boosting with a weak learner requires several iterations, but boosting with a  strong learner overfit very quickly. In some sense both $b$ and $h$ depend on each other.

By Theorem \ref{thm:min-max}, we have $h_{opt}=O(n^{-1/(2\pi+1)})$ which minimizes $E(MSE)$ for sufficiently large $b$. In general, the optimum $(b,h)$ will be found as the minimizers of 
\begin{align}
    P(b,h) &= \underset{h \in \mathcal{H}, b=1,\ldots,B}{\min} P_b(h)= \underset{h \in \mathcal{H},  b=1,\ldots,B}{\min} n^{-1}\sum_{i=1}^n \left( Y_i - \widehat{m}_{b}^*(X_i) \right)^2,
\end{align}
where $B$ is some fixed large number and $\mathcal{H}$ is a finite grid. Because of the overfitting problem associated with $P(b,h)$, in practice, either the cross-validation (CV) or the generalized cross-validation (GCV) approach is employed to tune the bandwidth $h$ and the number of boosting iterations, $b$. Using CV approach, we find the optimum $(b,h)$ as
\begin{align}
    CV(b,h) &= \underset{h \in \mathcal{H}, b=1,\ldots,B}{\min} CV_b(h) = \underset{h \in \mathcal{H}, b=1,\ldots,B}{\min} n^{-1}\sum_{i=1}^n \left( Y_i - \widehat{m}_{b,i}^*(X_i) \right)^2,
\end{align}
where $\widehat{m}_{b,i}^*(X_i)$ is estimated without involving $(X_i, Y_i)$.  Given the properties of $\bm{H}^*$ in Section \ref{sec:back}, we express the matrix $\mathcal{S}_b$  in $\widehat{\bm{m}}_b^*=\mathcal{S}_b \bm{y}$, given in (\ref{eqn:mbstar}), as
\begin{align}
    \mathcal{S}_b = \bm{I}-(\bm{I}-\bm{H}^*)^{b+1} = (b+1) \left(\bm{H}^*- \frac{b}{2} \bm{H}^{*^2} + o_p(1)\right). \label{eqn:sb-binom}
\end{align}
It is worth mentioning that the above expression (\ref{eqn:sb-binom}) also gives rise to an interesting relation between the boosting estimators at each iteration,
\begin{align*}
    \widehat{\bm{m}}_{b}^* &\approx (b+1) \widehat{\bm{m}}_{0}^*. 
\end{align*}
Since the smoother $\bm{H}^*$ has the diagonal elements of order $O(n^{-1}h^{-1})$ and the off-diagonal elements of order $O(n^{-1})$ \citep{hua08analysis}, by doing calculations similar to equation (2.5) in \citet{hardle1988far}, we can show that
\begin{align*}
    CV_b(h)/ P_b(h) \approx 1 + O_p(1/nh),
\end{align*}
for a fixed $b$, uniformly over $h$. We now define the average squared error at  iteration $b$ as
\begin{align}
    ASE_b(h) &= n^{-1}\sum_{i=1}^n (\widehat{m}_b^*(X_i) - m(X_i))^2.
\end{align}
Let $\widehat{h}_{0,b}$,  $\widehat{h}_{CV,b}$, and $h_{opt,b}$ be the minimizers of $ASE_b(h)$, $CV_b(h)$, and $E(MSE)$, respectively, at the $b$th iteration. Although giving a rigorous framework is beyond the scope of this paper, we conjecture that it is possible to obtain
\begin{align*}
 \widehat{h}_{CV,b}/ \widehat{h}_{0,b} \rightarrow 1, \qquad  \widehat{h}_{0,b}/h_{opt,b} \rightarrow 1,  
\end{align*}
in probability, by proceeding similar to \citet{ hardle1985optimal} and \citet{hardle1988far}. Since $h_{opt,b} \approx h_{opt}$ for sufficiently large $b$ (please see Theorem \ref{thm:min-max}), $\widehat{h}_{CV,b}$ is indeed a good choice.

Alternatively, given their asymptotic equivalence, we may use generalized cross-validation (GCV) approach to substitute the computationally intensive cross-validation procedure. The GCV at iteration $b$ is computed as
\begin{align}
    GCV &= \frac{n^{-1}\sum_{i=1}^n (Y_i- \widehat{m}_b^*(X_i))^2}{(1-\text{tr}(\mathcal{S}_b)/n)^2}. \label{eqn:gcv}
\end{align}
The properties of GCV have been investigated extensively in the literature. For recent results, we refer to \citet{amini2015optimal, roozbeh2018optimal} and the references therein.
%

\section{Contributions to Robustified Boosting}  \label{sec:rob-l2}

Because of the $L_2$ loss function, boosting is sensitive to outliers in the data. 
We robustify the boosting procedure developed in Section \ref{sec:l2boost}. The pseudo data technique detailed in \citet{cox1983asymptotics}, \citet{oh2007role} and \citet{oh2008recipe} is the key component of our methodology.

Let $\rho(x)$ be a convex function that is symmetric at zero grows slower than $x^2$ as $|x|$ becomes larger.  The Huber loss function is a well-known example of such a function, which is stated as  
\begin{align}
  \rho(x) &= \begin{cases}  x^2, & \vert x \vert \le c \\ 2c \vert x \vert-c^2, & \vert x \vert >c,  \end{cases} \label{eqn:huber}
\end{align}
where $c>0$ is the cutoff that is determined based on the data. Let $\psi= \rho'$ be the derivative of $\rho$. From (\ref{eqn:back-iuobj}), we observe that the estimator $\widehat{\bm{m}}^*=(\widehat{m}^*(X_1),\ldots, \widehat{m}^*(X_n))^T$ satisfies the following score equation under $L_2$ loss
\begin{align}
  n^{-1}\sum_{i=1}^n (Y_i-\widehat{m}^*(X_i)) &=0.
  \label{eqn:rob-l2score}
\end{align} 
Let $\widehat{\bm{m}}_\rho^*=(\widehat{m}_\rho^*(X_1), \ldots, \widehat{m}_\rho^*(X_n))^T$ be a robustified estimator for model (\ref{eqn:model}) under the loss function $\rho(\cdot)$. Therefore, from (\ref{eqn:rob-l2score}), it is reasonable to assume that the estimator $\widehat{\bm{m}}_\rho^*$ satisfies the following score equation 
\begin{align}
   n^{-1} \sum_{i=1}^n \psi(Y_i- \widehat{m}_\rho^*(X_i)) &=0.
   \label{eqn:rob-siscore}
\end{align}
Similar to \citet{oh2008recipe}, define the pseudo data
\begin{align*}
  Z_i= m(X_i) + \psi(\epsilon_i)/2, \qquad i=1,\ldots,n,
\end{align*}
given $\psi(\epsilon_i)$ exists and has a finite variance. 
By taking the empirical pseudo data $\widetilde{Z}_{\rho i}= \widehat{m}_\rho^*(X_i) + \psi\{ Y_i- \widehat{m}_\rho^*(X_i) \}/2$, we can express the score function (\ref{eqn:rob-siscore}) as 
\begin{align}
   n^{-1} \sum_{i=1}^n 2\{\widetilde{Z}_{\rho i}-\widehat{m}_\rho^*(X_i)\} &=0,
   \label{eqn:rob-tsiscore}
\end{align}
which is equivalent to the score equation under $L_2$ loss (\ref{eqn:rob-l2score}) with $\widetilde{Z}_{\rho i}$ as the response variable. This transformation serves as a motivation for the pseudo outcome approach described in \citet{cox1983asymptotics} and \citet{oh2007role}. It facilitates a theoretical analysis and provides an easily computing algorithm for model estimation \citep{cox1983asymptotics}. Since $\widetilde{Z}_{\rho i}$ involves $\widehat{m}_\rho^*(X_i)$ which is unknown, in practice, an iterative algorithm is needed. 

Let $\widetilde{\bm{m}}^*=(\widetilde{m}^*(X_1), \ldots, \widetilde{m}^*(X_n))^T$ denote an iterative estimator that satisfies
\begin{align}
  n^{-1} \sum_{i=1}^n 2\{\widetilde{Z}_i-\widetilde{m}^*(X_i)\} &=0,
  \label{eqn:rob-mtilde}
\end{align}
for $\widetilde{Z}_i=\widetilde{m}^*(X_i)+ \psi(Y_i-\widetilde{m}^*(X_i))/2$ for $i=1,\ldots,n$. Because it employs the $L_2$ loss function, the properties of $\widetilde{m}^*(X_i)$ are comparatively easy to acquire. The goal now is to prove that $\widetilde{m}^*(X_i)$ is asymptotically equivalent to $\widehat{m}_\rho^*(X_i)$ with the latter's properties remaining comparable to the former.

Let $\widetilde{\bm{z}}=(\widetilde{Z}_1,\ldots,\widetilde{Z}_n)^T$. We first present an algorithm for estimating $\widetilde{\bm{m}}^*$ and then provide a robustified boosting algorithm.
\vspace{1em}

\textbf{(II). Pseudo data Algorithm for $\widetilde{\bm{m}}^*$:}
\begin{enumerate}
  \item Obtain initial estimate $\widetilde{\bm{m}}_{(0)}^*=\bm{H}^*\bm{y}$.
  \item Set $\widetilde{\bm{z}}_{(0)}=\bm{y}$.
  \item Repeat the following steps ($k=1,\ldots$) until convergence 
  \begin{enumerate}
    \item Compute $\widetilde{\bm{z}}_{(k)}=\widetilde{\bm{m}}_{(k-1)}^*+ \psi(\bm{y}-\widetilde{\bm{m}}_{(k-1)}^*)$.
    \item Compute the estimator $\widetilde{\bm{m}}_{(k)}^*=\bm{H}^* \widetilde{\bm{z}}_{(k)}$.
  \end{enumerate}
  \item At convergence, take the final estimator as $\widetilde{\bm{m}}^* :=\widetilde{\bm{m}}_{(k)}^*$.
\end{enumerate}

The robustified $L_2$ boosting algorithm employs the above pseudo data algorithm (II) at every iteration. Mainly, it includes the following steps.

\vspace{1em}

\textbf{(III). Robustified $L_2$ Boosting Algorithm:}
\begin{itemize}
    \item[] Step 1 (Initialization): Given the data $\{X_i, Y_i \}$, $i=1,\ldots,n$, obtain an initial estimate $\widehat{\bm{m}}^*_0:= \widetilde{\bm{m}}^*$, from Algorithm II.
    \item[] Step 2 (Iteration): Repeat for $b=1,\ldots,B$.
    \begin{enumerate}
        \item Compute the residual vector $\bm{\delta} :=\bm{y}-\widehat{\bm{m}}^*_{b-1}$.
        \item Use Algorithm II with residuals as a response variable, $\bm{y}=\bm{\delta}$, to obtain $\widehat{\bm{m}}^*_\delta := \widetilde{\bm{m}}^*$.
        \item Update $\widehat{\bm{m}}^*_b:= \widehat{\bm{m}}^*_{b-1} + \widehat{\bm{m}}^*_\delta$.
    \end{enumerate} 
\end{itemize}
The concept of pseudo data has been successfully applied in the context of smoothing splines \citep{cox1983asymptotics, cantoni2001resistant} and wavelet regression \citep{oh2007role} to derive the asymptotic results as well as to facilitate a computation algorithm. The key result is that the robust smoothing estimator is asymptotically equivalent to a least-square smoothing estimator applied to the pseudo data. The proposed robustified boosting Algorithm (III) is general in the sense that it can be used with any other smoother considered in the study. In our numerical study, in addition to $\bm{H}^*$, we also employ this algorithm with Nadaraya-Watson and spline smoothers.

We now state the following assumption which is required for Theorem \ref{thm:rob-eq}. It is found in \citet{oh2007role}.

\begin{assumption} \label{as:psi}
The function $\psi$ has a continuous second derivative and satisfies $\underset{-\infty < t < \infty}{\sup}|\psi''(t| < \infty$. Assume $\psi$ is normalized such that $E[\psi(t)]=0$, $E[\psi'(t)]/2=1$, $\text{var}\{\psi(t)\} < \infty$, and  $\text{var}\{\psi'(t)\} < \infty$.
\end{assumption}

In Theorem \ref{thm:rob-eq}, we show that the estimators $\widehat{\bm{m}}_\rho^*$ and $\widetilde{\bm{m}}^*$ are asymptotically equivalent.
\begin{theorem}\label{thm:rob-eq}
  Suppose the Assumptions \ref{as:1-x}, \ref{as:2-ker-h},  and  \ref{as:psi} hold. Suppose $m(\cdot)$ is  four times continuously differentiable. Let $C_n=n^{-1}E\Vert  \widetilde{\bm{m}}^*- \bm{m}_\rho^* \Vert^2$,  where  $\Vert \cdot \Vert$ denotes the Euclidean norm, and assume that $C_n \rightarrow 0$ as $n \rightarrow \infty$. Then,
  \begin{align}
    n^{-1}\Vert \widetilde{\bm{m}}^*- \widehat{\bm{m}}_\rho^* \Vert^2 /\sqrt{C_n} \rightarrow 0
  \end{align} 
  in probability as $n \rightarrow \infty$.
\end{theorem}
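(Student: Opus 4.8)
The argument will follow the pseudo-data device of \citet{cox1983asymptotics} and \citet{oh2007role}: for $b=0$, both $\widetilde{\bm m}^*$ and $\widehat{\bm m}_\rho^*$ are solutions of essentially the same smoother fixed-point equation, and the plan is to linearize the nonlinearity $\psi$, show that the linearized operator is well conditioned, and conclude that the difference of the two estimators is governed by a purely second-order remainder, hence of smaller order than $\sqrt{C_n}$. Concretely, since for any response vector $\bm v$ the integrated $L_2$ local (linear or constant) fit is exactly $\bm H^*\bm v$, the limit of Algorithm II is the fixed point $(\bm I-\bm H^*)\widetilde{\bm m}^*=\tfrac12\bm H^*\psi(\bm y-\widetilde{\bm m}^*)$; writing the first-order condition of the integrated robust loss (the $\rho$-analogue of (\ref{eqn:back-iuobj})) through the pseudo-outcome identity behind (\ref{eqn:rob-tsiscore}) puts $\widehat{\bm m}_\rho^*$ in the same form up to a term $\bm r_n$ capturing the gap between solving the coupled nonlinear local problems exactly and freezing the local weights at their $L_2$ values; a uniform-in-$x$ expansion of the local robust estimating equations, using Assumption \ref{as:psi} and the smoothness Assumptions \ref{as:1-x}--\ref{as:m-4th}, shows $n^{-1}\Vert\bm r_n\Vert^2$ is second order.

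Next I would linearize. With $\bm\Delta=\widetilde{\bm m}^*-\widehat{\bm m}_\rho^*$, subtracting the two fixed-point relations and Taylor-expanding $\psi$ to first order (legitimate since $\sup|\psi''|<\infty$) gives
\begin{align*}
  \left[\bm I-\bm H^*\!\left(\bm I-\tfrac12\bm\Psi'\right)\right]\bm\Delta &= \tfrac12\,\bm H^*\bm R-\bm r_n,
\end{align*}
where $\bm\Psi'=\mathrm{diag}\{\psi'(\xi_i)\}$ is evaluated at intermediate points and $\bm R$ is the vector of second-order remainders, bounded entrywise by $\tfrac12\sup|\psi''|\,\Delta_i^2$. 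Because $\bm H^*$ has eigenvalues in $(0,1]$ (Corollary \ref{prop:eigen}) and, by the normalization in Assumption \ref{as:psi}, $E[\bm I-\tfrac12\bm\Psi']=\bm 0$, the matrix $\bm H^*(\bm I-\tfrac12\bm\Psi')$ behaves like the smoother applied to a mean-zero, near-white diagonal perturbation; a law-of-large-numbers argument over the eigenbasis of $\bm H^*$, handled separately on the span of the few leading eigenvectors (where $\lambda_k\approx 1$), shows $\bm I-\bm H^*(\bm I-\tfrac12\bm\Psi')$ is invertible with a uniformly bounded inverse with probability tending to one, so that $\Vert\bm\Delta\Vert\le C\bigl(\Vert\bm R\Vert+\Vert\bm r_n\Vert\bigr)$.

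To finish, write $\bm m_\rho^*=\bm H^*\bm z$ with $\bm z=(Z_1,\ldots,Z_n)^T$ for the pseudo-data benchmark in the definition of $C_n$, and bound $\Vert\bm R\Vert^2\le\tfrac14(\sup|\psi''|)^2\Vert\bm\Delta\Vert_\infty^2\Vert\bm\Delta\Vert^2$ via $\Delta_i^2\le 2\{\widetilde{m}^*(X_i)-m_\rho^*(X_i)\}^2+2\{\widehat{m}_\rho^*(X_i)-m_\rho^*(X_i)\}^2$. Since both $\widetilde{\bm m}^*$ and $\widehat{\bm m}_\rho^*$ are, to leading order, linear smoothers of pseudo data, $n^{-1}\Vert\widetilde{\bm m}^*-\bm m_\rho^*\Vert^2=O_p(C_n)$ and $n^{-1}\Vert\widehat{\bm m}_\rho^*-\bm m_\rho^*\Vert^2=O_p(C_n)$ (analysed as in Theorem \ref{thm:min-max}), while $\Vert\bm\Delta\Vert_\infty=o_p(1)$ by sup-norm consistency; combined with the bound from the linearization step this gives $n^{-1}\Vert\bm\Delta\Vert^2=o_p(1)\cdot O_p(C_n)$. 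Dividing by $\sqrt{C_n}$ and applying Markov's inequality together with the hypothesis $C_n\to 0$ yields $n^{-1}\Vert\widetilde{\bm m}^*-\widehat{\bm m}_\rho^*\Vert^2/\sqrt{C_n}\to 0$ in probability.

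The crux is the conditioning estimate: $\bm\Psi'$ is a \emph{random} diagonal matrix whose entries equal $2$ only on average, and $\bm H^*$ is not contractive on the span of its leading eigenvectors, so the well-conditioning of $\bm I-\bm H^*(\bm I-\tfrac12\bm\Psi')$ cannot be read off from a crude operator-norm bound and needs a genuine concentration argument for the projections of $\bm I-\tfrac12\bm\Psi'$ onto the eigenbasis of $\bm H^*$, combined with the eigenvalue decay of Corollary \ref{prop:eigen}. A secondary difficulty is verifying that the frozen-weights remainder $\bm r_n$ and the Taylor remainders are genuinely second order, which is where the smoothness assumptions and the boundedness of $\psi''$ do the essential work.
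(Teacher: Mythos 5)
Your overall strategy (pseudo-data linearization plus a fixed-point/inversion step) belongs to the same family of argument the paper has in mind: the paper does not write out a proof at all, but states that the result is a special case of Theorem 1 of \citet{oh2007role} and that the argument has three parts --- uniform bounds on the score functions (\ref{eqn:rob-siscore}) and (\ref{eqn:rob-tsiscore}), a fixed-point argument, and evaluation of the pseudo-data score at the robust estimator. So at the level of strategy you are aligned with the source the paper leans on.

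As written, however, your proposal has two genuine gaps. First, the two steps you yourself flag as ``the crux'' --- the invertibility of $\bm{I}-\bm{H}^*(\bm{I}-\tfrac12\bm{\Psi}')$ with a uniformly bounded inverse, and the verification that the frozen-weights and Taylor remainders are second order --- are not carried out; they are precisely the technical content of the cited theorem, so the proposal is a plan rather than a proof. Second, and more seriously, your final step asserts $n^{-1}\Vert\widehat{\bm{m}}_\rho^*-\bm{m}_\rho^*\Vert^2=O_p(C_n)$ ``analysed as in Theorem \ref{thm:min-max}.'' That rate argument applies to a linear smoother of observable data, whereas $\widehat{\bm{m}}_\rho^*$ is defined only implicitly through the nonlinear score equation (\ref{eqn:rob-siscore}); obtaining its rate is exactly what Theorem \ref{thm:rob-eq} is designed to make possible, by transferring the tractable rate of $\widetilde{\bm{m}}^*$ to $\widehat{\bm{m}}_\rho^*$. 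Assuming it begs the question --- and note that if you did have this bound you would not need the linearization at all, since the triangle inequality together with $n^{-1}\Vert\widetilde{\bm{m}}^*-\bm{m}_\rho^*\Vert^2=O_p(C_n)$ (Markov) already gives $n^{-1}\Vert\bm{\Delta}\Vert^2=O_p(C_n)=o_p(\sqrt{C_n})$ because $C_n\rightarrow 0$. The way \citet{oh2007role} avoid this circularity is to run the fixed-point argument in the other direction: using only uniform bounds on the difference between the $\psi$-score and the pseudo-data $L_2$-score over a shrinking neighbourhood of $\widetilde{\bm{m}}^*$, they show by a Brouwer-type argument that a root of the robust score equation exists inside a ball of radius $o(\sqrt{C_n})$ in the $n^{-1}\Vert\cdot\Vert^2$ metric, which yields the conclusion without any a priori rate for $\widehat{\bm{m}}_\rho^*$. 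You should restructure your last step along those lines.
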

 The proof is a special case of the proof from Theorem 1 of \citet{oh2007role}.  Therefore, we omit the proof. It has mainly three basic parts: finding uniform bounds on the score functions  (\ref{eqn:rob-siscore}) and (\ref{eqn:rob-tsiscore}), applying a fixed point argument, and evaluating pseudo data score function (\ref{eqn:rob-tsiscore}) at the robust estimator. Theorem \ref{thm:rob-eq} aids in establishing the asymptotic properties of the robustified boosting estimator $\widehat{\bm{m}}_b^*$ in Algorithm (III). It is sufficient to deduce the asymptotic properties of  $\widetilde{\bm{m}}^*$, which may be derived similarly to (\ref{eqn:h0-bias}).

 We note that obtaining the theoretical properties of the robustified boosting algorithm is not easy given the involvement of the nonlinear function $\psi$. We defer this for future research.

\section{Simulation Study }\label{sec:sim}
In this section, we use simulations to assess the finite sample performances of the proposed boosting methods. We investigate two scenarios: one with no outliers and the other with few outliers. All the computations were done using the software \texttt{Julia} \citep{Julia-2017} on a CentOS 7 machine. 

\subsection{Example 1: Without outliers} \label{subsec:numeric}
For this example, we mimic the simulation design in \citet{buhlmann2003boosting}. The following models are used to generate data:
\begin{align}
  \begin{split}
  (M1)~:~ Y_i &= 0.8X_i + \sin(6X_i) + \epsilon_i,\\
  (M2)~:~ Y_i &= 0.4\{ 3\sin(4\pi X_i) + 2 \sin(3\pi X_i)\}+ \epsilon_i,
  \end{split}
  \label{eqn:sim-ex-m1}
\end{align}
 for $i=1,\ldots,n$, where $\epsilon_i \sim N(0,2)$ and $X_i \sim U(-0.5,0.5)$ which is different from \citet{buhlmann2003boosting}.  The function $(M2)$ is taken from \citet{park2009}. Using both $\bm{H}_0^*$ (local constant) and $\bm{H}_1^*$ (local linear) smoothers, we employ $L_2$ boosting to estimate (\ref{eqn:sim-ex-m1}). A fixed grid of length 200 is utilized to approximate the integrals for both of them. Both the Nadaraya-Watson \citep{di2008boosting}  and the cubic spline \citep{buhlmann2003boosting} smoothers are considered for comparison.  For kernel and projection smoothers, we use Epanechnikov and Gaussian kernels.

To assess the predictive performance of the above methods, we compute their out-of-sample prediction errors as follows:
\begin{enumerate}
  \item For a sample of size $n$, for each of the three kernels and one spline methods, we perform a 5-fold cross-validation for each pair of values $(h,b)$  and $(\lambda,b)$, respectively. Identify the optimal pairs $(\widehat{h}, \widehat{b})$ and $(\widehat{\lambda}, \widehat{b})$ which minimize the MSE defined in (\ref{eqn:mse}).
  \item We now simulate 100 datasets of size $n$ and apply the above 4 boosting algorithms with their respective $(\widehat{h}, \widehat{b})$ and $(\widehat{\lambda},\widehat{b})$.
  \item Compute the average ASE values of the 100 datasets, where 
  \begin{align}
  ASE &=n^{-1}\sum_{i=1}^n \{m(X_i)-\widehat{m}^*_{\widehat{b}}(X_i)\}^2.
  \label{eqn:mse-t}
  \end{align} 
  
  \item Repeat the above steps (1--3) $10$ times and compute the average and standard deviation of the average $ASE$ values obtained in step (3).
\end{enumerate}
For the cross-validation in step 1, we perform search on a grid of length $40$ in the intervals $[0.1,4]$ and $[0,1000]$ for bandwidth $h$ and spline smoothing parameter $\lambda$, respectively. This search is performed for $5000$ boosting iterations.

The average $ASE$ values for model (M1) for increasing sample sizes $n=100, 200, 500, 1000$ are shown in Table \ref{tab:ex11}. The results are consistent across the four smoothers: $\bm{H}^*$ with $p=0$ and $p=1$, Nadaraya-Watson, and cubic smoothing spline. The Nadaraya-Watson smoother provided relatively larger $ASE$ values for the Epanechnikov kernel than the Gaussian kernel. The results from $\bm{H}^*$ (local constant and local linear) are comparable across both kernels and similar to the results from spline smoother. The results for model (M2) are shown in Table \ref{tab:ex12}. For model (M2), Nadaraya-Watson smoother produced slightly better results. Overall, the $\bm{H}^*$ smoothers performed well for both kernel functions and their results are comparable to the results from other smoothers.

\begin{table*}
  \begin{center}
  \begin{minipage}{\textwidth}
  \caption{Model $(M1)$: Average and standard deviations of 10 average $ASE$ values across 100 simulated datasets. LC: local constant, LL: local linear, NW: Nadaraya-Watson, SS: Smoothing Splines. Ep: Epanechnikov kernel, Ga: Gaussian Kernel}\label{tab:ex11}
  \begin{tabular*}{\textwidth}{@{\extracolsep{\fill}}cccccc@{\extracolsep{\fill}}}
  \toprule%
   Kernel & Sample size (n) & $\bm{H}^*_0$ (LC) & $\bm{H}^*_1$ (LL) & NW & SS  \\
  \midrule
  \multirow{5}{*}{Ep} & 100  &0.1117    &0.1174   &0.1596 &   0.1246 \\
  & & (0.045) & (0.069) & (0.075) & (0.053) \\
  & 200  &0.0691   &0.0646   &0.0728 &  0.0639 \\
  && (0.023) & (0.030)  & (0.037)  & (0.024) \\
  & 500  &0.0198   &0.0273  &0.0307  &0.0158 \\
  & &  (0.013) & (0.016) &  (0.025) & (0.009) \\
  & 1000  &0.0165  &0.0136   &0.0137  &0.0114\\
   & & (0.014)  & (0.005) & (0.007) & (0.007) \\
  \midrule
  \multirow{5}{*}{Ga}&100   &0.1341   &0.1249  &0.1273   &0.1246\\
  & & (0.057) & (0.072) & (0.055) & (0.053) \\
   &200  &0.0628  &0.0599   &0.0590   &0.0639\\
   & & (0.025)  & (0.026) &  (0.022) & (0.024) \\
   &500  &0.0190   &0.0228   &0.0231   &0.0158\\
   & & (0.018) & (0.018) & (0.028) & (0.009) \\
   &1000  &0.0119  &0.0111   &0.0107  &0.0114\\
   & & (0.009) & (0.005) & (0.006)  & (0.007)\\
  \bottomrule
  \end{tabular*}
  \end{minipage}
  \end{center}
  \end{table*}

\begin{table*}[h]
  \begin{center}
  \begin{minipage}{\textwidth}
  \caption{ Model $(M2)$: Average and standard deviations of 10 average $ASE$ values (\ref{eqn:mse-t}) for 100 simulated data for different sizes and using different kernel functions. LC: local constant, LL: local linear, NW: Nadaraya-Watson, SS: Smoothing Splines. Ep: Epanechnikov kernel, Ga: Gaussian Kernel}\label{tab:ex12}
  \begin{tabular*}{\textwidth}{@{\extracolsep{\fill}}cccccc@{\extracolsep{\fill}}}
  \toprule%
   Kernel & Sample size (n) & $\bm{H}^*_0$ (LC) & $\bm{H}^*_1$ (LL) & NW & SS  \\
  \midrule
  \multirow{5}{*}{Ep} & 100  &0.2233    &0.2055   &0.2316 &   0.3590 \\
  & & (0.090) & (0.148) & (0.086) & (0.090) \\
  & 200  &0.1231   &0.1395   &0.1351 &  0.1628 \\
  && (0.047) & (0.067)  & (0.041)  & (0.041) \\
  & 500  &0.0537   &0.0511  &0.0489  &0.035 \\
  & &  (0.036) & (0.028) &  (0.029) & (0.009) \\
  & 1000  &0.0299 &0.0305   &0.0282  &0.0178\\
   & & (0.016)  & (0.025) & (0.018) & (0.010) \\
  \midrule
  \multirow{5}{*}{Ga}&100   &0.1796   &0.1858  &0.1926   &0.3590\\
  & & (0.105) & (0.126) & (0.106) & (0.090) \\
   &200  &0.1045  &0.1024   &0.1045   &0.1628\\
   & & (0.027)  & (0.040) &  (0.031) & (0.041) \\
   &500  &0.0391   &0.0383   &0.0341   &0.0350\\
   & & (0.020) & (0.025) & (0.028) & (0.018) \\
   &1000  &0.0224  &0.0232   &0.0173  &0.0178\\
   & & (0.018) & (0.021) & (0.012)  & (0.010)\\
  \bottomrule
  \end{tabular*}
  \end{minipage}
  \end{center}
  \end{table*}

We also assess the predictive performance of  low-rank smoothers. The mean and standard deviation of 10 average $ASE$ values across 100 simulated datasets for smoother $\bm{H}_0^*$ (local constant) are shown in Table \ref{tab:ex1-rr}. We find that for samples of sizes 100 and 200, only ($d_n=$) 10  basis functions are needed to approximate the full-rank smoother. The approximation improves as $d_n$ increases. Furthermore, the findings show that the low-rank smoother outperforms the full-rank smoother in terms of prediction accuracy. The results for model (M2) are provided in Table \ref{tab:ex1-rr-m2}. The findings remain similar to those from Table \ref{tab:ex1-rr}.   

  \begin{table*}[h]
    \begin{center}
    \begin{minipage}{\textwidth}
    \caption{Model $(M1)$: Low-rank smoother $\bm{H}_0^*(d)$: Average and standard deviations of 10  average $ASE$ values (\ref{eqn:mse-t}) of 100 simulated datasets. $d_n$: number of basis functions (low-rank). }\label{tab:ex1-rr}
    \begin{tabular*}{\textwidth}{@{\extracolsep{\fill}}cccccc@{\extracolsep{\fill}}}
    \toprule%
    Sample size (n) & $d_n=2$ & $d_n=5$ & $d_n=10$ & $d_n=15$ & $d_n=n$  \\
    \midrule
     \multirow{2}{*}{100} & \textbf{0.1141} &0.1201    &0.115  &0.1149 &   0.1149 \\
         & (0.039) & (0.057) & (0.047) & (0.047) & (0.047) \\
      \multirow{2}{*}{200} & 0.0769 & \textbf{0.0664}   &0.0683   &0.0685&  0.0686 \\
        & (0.047) & (0.028) & (0.027)  & (0.0271)  & (0.0271) \\
      \multirow{2}{*}{500}  & \textbf{0.0171}  & 0.0172   &0.0191  &0.0223  &0.0320 \\
         & (0.0065) & (0.0080) & (0.0128) &  (0.0224) & (0.0387) \\
    \bottomrule
    \end{tabular*}
    \end{minipage}
    \end{center}
    \end{table*}

\begin{table*}[h]
  \begin{center}
  \begin{minipage}{\textwidth}
  \caption{Model $(M2)$: Low-rank smoother $\bm{H}_0^*(d)$: Average and standard deviations of 10 average $ASE$ values (\ref{eqn:mse-t}) of 100 simulated datasets. $d_n$: number of basis functions (low-rank). }\label{tab:ex1-rr-m2}
  \begin{tabular*}{\textwidth}{@{\extracolsep{\fill}}cccccc@{\extracolsep{\fill}}}
  \toprule%
  Sample size (n) & $d_n=2$ & $d_n=5$ & $d_n=10$ & $d_n=15$ & $d_n=n$  \\
  \midrule
   \multirow{2}{*}{100} & 0.2248 &  \textbf{0.2157}    & 0.2163  & 0.2163 &   0.2163 \\
       & (0.1109) & (0.1049) & (0.1102) & (0.1101) & (0.1101) \\
    \multirow{2}{*}{200} & 0.1470 &  0.1338   & \textbf{0.1053}   &0.1039 &  0.1039 \\
      & (0.0460) & (0.0608) & (0.0284)  & (0.0291)  & (0.0291) \\
    \multirow{2}{*}{500}  &  0.1066  & 0.0406   & \textbf{0.0394}  &0.0451  &0.0507 \\
       & (0.0257) & (0.0161) & (0.0209) &  (0.0266) & (0.0366) \\
  \bottomrule
  \end{tabular*}
  \end{minipage}
  \end{center}
  \end{table*}

\subsection{Example 2: With outliers} \label{subsec:numeric-rob}
In this section, we evaluate the performance of the robustified $L_2$ boosting algorithm. The same simulation design of Section \ref{subsec:numeric} is considered with one change; errors $\epsilon_i$ are now simulated from a $t$-distribution with $3$ degrees of freedom. We use Huber loss function (\ref{eqn:huber}) to robustify the boosting algorithm. The results of non-robust boosting algorithms are also provided for comparison. To compute the average $ASE$ values, we used the same approach as in Example 1. However, because of the Huber loss function, the mean square error values and cross-validation errors for the boosting iterations are computed using the following function
\begin{align}
 RoMSE_{\rho} &= \frac{MSE}{n} \sum_{i=1}^n \rho\left(\frac{Y_i- \widehat{m}_b^*(X_i)}{MSE^{1/2}}\right), \label{eqn:romse} 
\end{align}
where $MSE$ is defined in (\ref{eqn:mse}). The Huber constant $c$ is a tuning parameter that needs to be estimated from the data. In the following section where we analyze a real data, we consider the choice of $\widehat{c}=1.345\widehat{\sigma}$, where $\widehat{\sigma}$ is any robust estimator of the population standard deviation $\sigma$ \citep{huber2004robust}.  

Table \ref{tab:ex21} shows the results of the robustified boosting for the Huber constants $c=1$ and $c=2$. The robustified boosting approach produced lower $ASE$ values than the non-robust ones. This implies that the robustified methods minimize the effect of outliers in the boosting procedure. Surprisingly, $\bm{H}^*$ smoothers outperformed  the Nadaraya-Watson smoother in terms of smaller errors. The results from $\bm{H}_0^*$ (local constant) and spline smoother are nearly identical. Furthermore, we find that the choice of the Huber constant $c$ is not as important for large samples. 

Overall,  the projection-based smoothers $\bm{H}^*$ are demonstrated to be very useful tools for the boosting algorithm. They are useful for investigating the effect of low-rank smoothers on boosting because of their appealing theoretical properties. Moreover, the robustified boosting approach outperforms the original $L_2$ boosting with the contaminated data. 
%
\begin{sidewaystable}
  \begin{center}
  \begin{minipage}{\textwidth}
  \caption{Average and standard deviations of 10 average $ASE$ values (\ref{eqn:mse-t}) for 100 simulated data of different sizes and using different Huber constant $c$ values. LC: local constant, LL: local linear, NW: Nadaraya-Watson, SS: Smoothing Splines.}\label{tab:ex21}
  \begin{tabular*}{\textwidth}{@{\extracolsep{\fill}}cccccccccc@{\extracolsep{\fill}}}
  \toprule%
   \multirow{2}{5em}{Huber constant($c$)} & \multirow{2}{3em}{Sample size(n)} & \multicolumn{2}{c}{$\bm{H}^*_0$ (LC)} & \multicolumn{2}{c}{$\bm{H}^*_1$ (LL)} & \multicolumn{2}{c}{NW} & \multicolumn{2}{c}{SS} \\
   \cline{3-10}
   &    &   robust & non-robust & robust & non-robust & robust & non-robust & robust & non-robust\\  
  \midrule
  \multirow{5}{*}{1.0} & 100  &0.1028    &0.2018  &0.1101 &  0.3954 & 0.1517 & 0.2150 &  0.1179 &  0.2183\\
                        & & (0.081) & (0.168) & (0.091) & (0.297) & (0.088) & (0.095) &  (0.098) & (0.157)\\
                     & 200  &0.0428   &0.0976   & 0.0550 & 0.2350 &  0.0508 &  0.0941 & 0.0364  & 0.0957\\
                          && ( 0.024) & (0.061)  & (0.051)  & (0.239) & (0.027) & (0.063) & (0.027) & (0.068)\\
                      & 500  &0.0199   &0.0325  & 0.0254 & 0.1530 & 0.0207  & 0.0469  & 0.0172   & 0.0327\\
                        & &  (0.009) & (0.016) &  (0.018) & (0.071)  & (0.011) &  (0.015) & (0.010) & (0.024)\\
   \midrule
  \multirow{6}{*}{2.0}& 100   &0.1309   &0.2018  &0.1158   &0.3954 & 0.1981  & 0.2150 & 0.1452 & 0.2183\\
                          & & (0.111) & (0.168) & (0.083) & (0.297) & (0.098) & (0.095) & (0.109) & (0.157)\\
   &200  &0.0460  &0.0976   &0.0617   &0.2350 & 0.0531 & 0.0941 & 0.0433 & 0.0957 \\
      & & (0.028)  & (0.061) &  (0.058) & (0.239) & (0.031) & (0.063) & (0.035) & (0.068) \\
   &500  &0.0203   &0.0325   &0.0237   &0.1530 & 0.0215 & 0.0469 & 0.0178 & 0.0327\\
   & & (0.009) & (0.016) & (0.014) & (0.071)  & (0.010) & (0.015) & (0.009) & (0.024)\\
   \bottomrule
  \end{tabular*}
  \end{minipage}
  \end{center}
\end{sidewaystable}

%
\section{Real Application}\label{sec:real-app}
We consider the \texttt{cps71} data from \citet{ullah1985specification}, \citet{pagan1999nonparametric} and \citet{hayfield2008nonparametric} (1971 Canadian Public Use Tapes). It includes age and income information for 205 Canadian individuals who were educated to the thirteenth grade. Schooling for these individuals was believed to be equal across both genders. 

We are interested in the following model:
\begin{align}
  \ln(wage_i) &= m(age_i) + \epsilon_i, \qquad i=1,\ldots, 205.
  \label{eqn:cps-model}
\end{align}
Figure \ref{fig:sps-scatter} depicts the relationship between the covariate age and the outcome variable which is the logarithm of wage. Their relationship appears to be quadratic \citep{pagan1999nonparametric}, with multiple outliers, particularly among the elderly. This data is used to test the proposed robust and non-robust boosting methods. We use $200$ grid points to calculate smoother matrices $\bm{H}^*$ (both local constant and local linear). The value for the Huber's constant $c$ is chosen as, $c=1.345~\widehat{\sigma}$, where $\widehat{\sigma}$ is a robust location-free scale estimate $\sigma$ \citep{rousseeuw1993alternatives} based on the suggestion of \citet{oh2007role} and \citet{huber2004robust}. The \emph{gam} function in \texttt{mgcv} package \citep{wood2015package} is used to estimate model (\ref{eqn:cps-model}), and the residuals are used to compute $\widehat{\sigma}$.  The calculation are performed on a  \texttt{CentOS} machine using the  programming language \texttt{Julia} \citep{Julia-2017}.
\begin{figure}[h]
  \centering
  \includegraphics[scale=0.6]{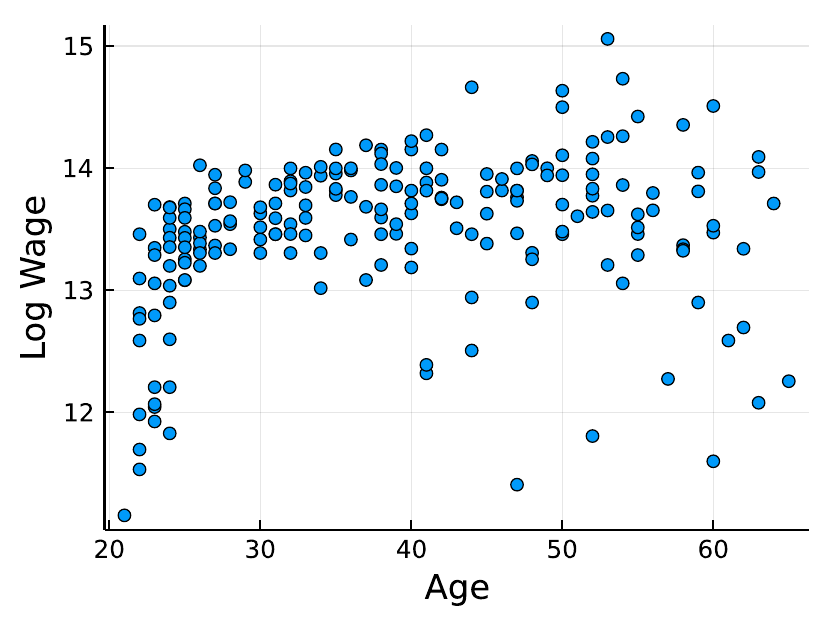}
  \caption{Scatter plot between the covariate Age and the response variable Log Wage for \texttt{cps71} data.}
  \label{fig:sps-scatter}
\end{figure}

Smoothers $\bm{H}^*$ (local constant, local linear), Nadaraya-Watson, and cubic spline are used to estimate the model \ref{eqn:cps-model} utilizing  Algorithms (I) and (III) for non-robust and robust methods, respectively. We use Epanechnikov kernel  for kernel smoothing methods. To select the  optimal pair  $(\widehat{h}, \widehat{b})$ for each smoother and algorithm combination, a 5-fold cross-validation is employed. Grid search is performed for the bandwidth range $h=[5,20]$, smoothing parameter range $\lambda=[0,5000]$, and the number of boosting iterations $b=1000$. 

 On the whole dataset of 205 observations, we train both robust and non-robust boosting approaches. For evaluation, we create a trimmed data which excludes observations when logarithm wage lies outside of (14.15,13.03) with age more than 30 years. This approximately excludes 20\% observations. The model evaluation procedure is as follows:
 \begin{itemize}
     \item Using the whole data, perform a 5 fold cross-validation or a GCV to find the optimal parameters $(h,b)$ or $(\lambda,b)$. The cross-validation uses $MSE$ (in \ref{eqn:mse}) for non-robust methods and uses $RoMSE_{\rho}$ (in \ref{eqn:romse}) for robust methods. Alternatively, the GCV approach uses $GCV$ (in \ref{eqn:gcv}) for non-robust methods and $RoGCV$ for robust methods which,  at iteration $b$, is defined as
     \begin{align*}
         RoGCV &= \frac{n^{-1} \sum_{i=1}^n \rho(Y_i- \widehat{m}_b^*(X_i))}{\left( 1 - \text{tr}(\mathcal{S}_b)/n\right)^2}.
     \end{align*}

     \item For the optimal values of $(\widehat{h}, \widehat{b})$ or $(\widehat{\lambda}, \widehat{b})$, compute the fitted values for the trimmed data and calculate MSE values. 
 \end{itemize}

First, we choose the cross-validation procedure. The results from all the above mentioned smoothers are presented in Table \ref{tab:real-ex} and Figure \ref{fig:cps-optd}. We find that all the robust methods performed better than their non-robust counterparts. Unlike the results from Table \ref{tab:ex21} where $\bm{H}_0^*(LC)$ produced the smallest $ASE$ values, the NW smoother produced the smallest $MSE$ values in Table \ref{tab:real-ex}. We believe that this is due to the smaller bandwidth selected for  NW smoother in Table \ref{tab:real-ex} which can also be observed from Figure \ref{fig:cps-optd}. Additionally, we find that the results are not very sensitive to the value of Huber's constant $c$. This is much desirable in practice. 
\begin{figure}[h]
  \centering
  \includegraphics[scale=0.3]{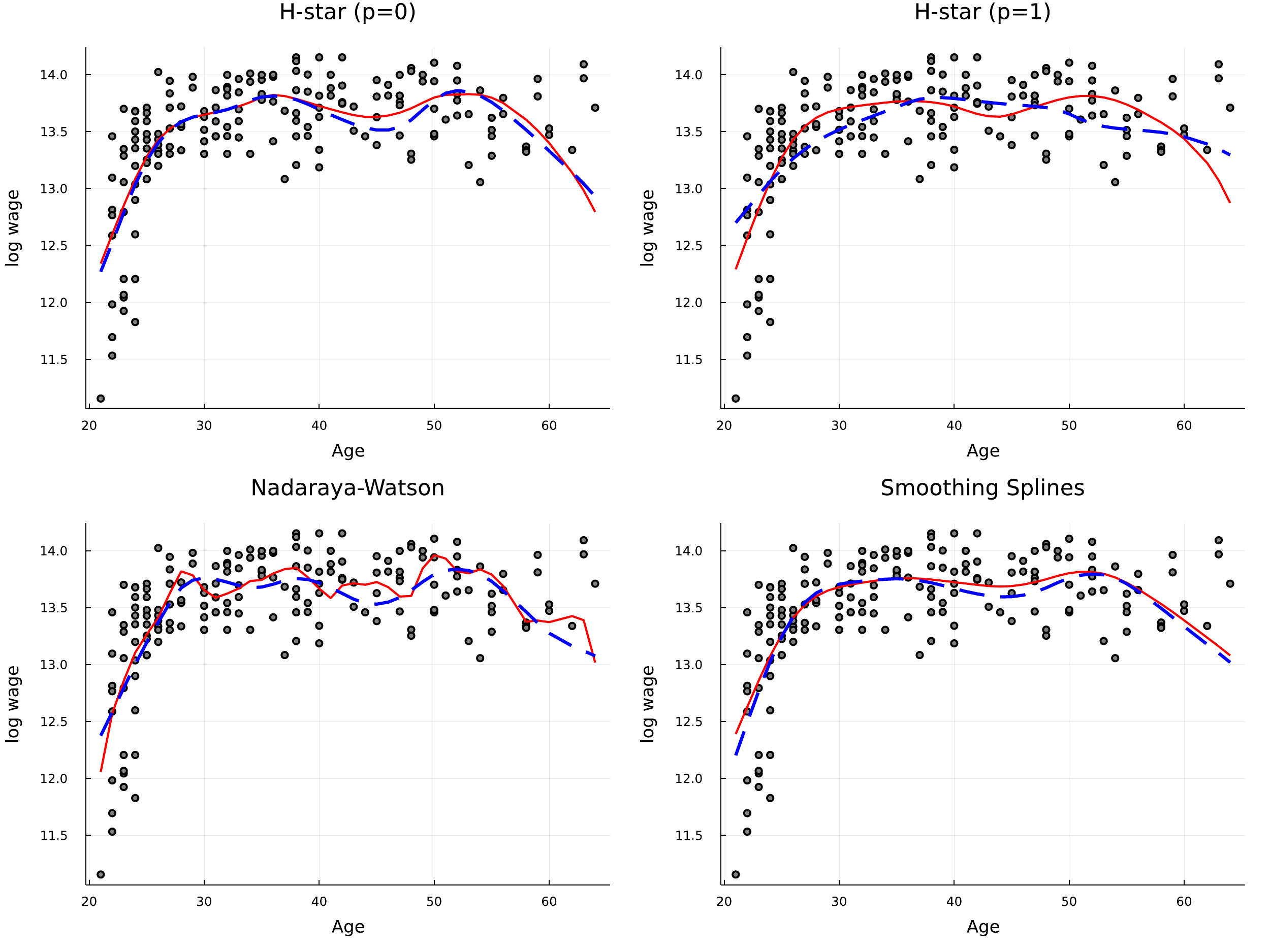}
  \caption{Robust (solid) and non-robust (dash) fits for the model (\ref{eqn:cps-model}) on full data. Here, Huber constant $\widehat{c}=\widehat{\sigma}=0.64$. The optimal pair $(\widehat{h},\widehat{b})$ for each model is chosen based on the 5-fold cross-validation. The dots denote the observations from the trimmed data. }
  \label{fig:cps-optd}
\end{figure}
\begin{table}[h]
\scriptsize
  \begin{center}
  \begin{minipage}{\textwidth}
  \caption{Results based on optimal $(\widehat{h}, \widehat{b})$ and  $(\widehat{\lambda}, \widehat{b})$ chosen by the cross-validation. Here $\widehat{\sigma}=0.64$. The $MSE$ (\ref{eqn:mse}) and $CV$ values are computed based on the trimmed data and on the whole data, respectively. }\label{tab:real-ex}
  \begin{tabular*}{\textwidth}{@{\extracolsep{\fill}}cccccccccc@{\extracolsep{\fill}}}
  \toprule%
  \multirow{2}{6em}{Huber constant ($\widehat{c}$)} &\multirow{2}{5em}{Smoother} & \multicolumn{4}{c}{Robust} & \multicolumn{4}{c}{Non-robust} \\
   \cline{3-10}
    &   &     $\widehat{h}$ or $\widehat{\lambda}$  & $\widehat{b}$ & $MSE$ & $CV$  & $\widehat{h}$ or $\widehat{\lambda}$ & $\widehat{b}$ & $MSE$ & $CV$ \\  
  \midrule
  \multirow{2}{5em}{$1.34\widehat{\sigma}$} &$\bm{H}_0^*$  &  13.84 & 330    & 0.1418  & 0.1057 &12.69 &  255 & 0.1440   & 0.2976\\
  &$\bm{H}_1^*$  &  20.00 & 227    & 0.1351 &0.1066 & 15.00 &  79  & 0.1491 & 0.2995\\
  &NW            &  5.76 & 19     & 0.1344 & 0.0938   &6.15 &  20  & 0.1443 & 0.2560 \\
  &SS            &  128.20 & 1    & 0.1348 & 0.1070   & 256.41 & 2 & 0.1367 & 0.3001\\
  \midrule
  \multirow{2}{5em}{$\widehat{\sigma}$}&$\bm{H}_0^*$  &  13.84 & 445    & 0.1440 & 0.0899 &12.69 &  255 & 0.1440 & 0.2976   \\
  &$\bm{H}_1^*$  &  20.00 & 219    & 0.1380 & 0.0911 & 15.00 &  79  & 0.1491 & 0.2995\\
  &NW            &  5.76 & 21     & 0.1347 &0.0816   &6.15 &  20  & 0.1443 & 0.2560 \\
  &SS            &  128.20 & 1    & 0.1362 & 0.0915   & 256.41 & 2 & 0.1367 & 0.3001\\
  \midrule
  \multirow{2}{5em}{$1.6\widehat{\sigma}$} & $\bm{H}_0^*$  &  13.84 & 305  & 0.1417  & 0.1057 &12.69 &  255 & 0.1440   & 0.2976\\
  &$\bm{H}_1^*$  &  20.00 & 231   & 0.1354 & 0.1155 & 15.00 &  79  & 0.1491 & 0.2995\\
  &NW            &  5.76 & 19     & 0.1346 & 0.1014   &6.15 &  20  & 0.1443 & 0.2560 \\
  &SS            &  256.41 & 3    & 0.1341 & 0.1160   & 256.41 & 2 & 0.1367 & 0.3001\\

  \bottomrule
  \end{tabular*}
  \end{minipage}
  \end{center}
\end{table}
%

\begin{table}[h]
\scriptsize
  \begin{center}
  \begin{minipage}{\textwidth}
  \caption{Results based on optimal $(\widehat{h}, \widehat{b})$ and  $(\widehat{\lambda}, \widehat{b})$ chosen by the GCV. Here $\widehat{\sigma}=0.64$. The $MSE$ (\ref{eqn:mse}) and $GCV$ values are computed based on the trimmed data and on the whole data, respectively. (**) GCV score for NW smoother is approaching to zero with increasing $b$.}\label{tab:real-ex-gcv}
  \begin{tabular*}{\textwidth}{@{\extracolsep{\fill}}cccccccccc@{\extracolsep{\fill}}}
  \toprule%
  \multirow{2}{6em}{Huber constant ($\widehat{c}$)} &\multirow{2}{5em}{Smoother} & \multicolumn{4}{c}{Robust} & \multicolumn{4}{c}{Non-robust} \\
   \cline{3-10}
    &   &     $\widehat{h}$ or $\widehat{\lambda}$  & $\widehat{b}$ & $MSE$ & $GCV$  & $\widehat{h}$ or $\widehat{\lambda}$ & $\widehat{b}$ & $MSE$ & $GCV$ \\  
  \midrule
  \multirow{2}{5em}{$1.34\widehat{\sigma}$} &$\bm{H}_0^*$  &  12.69 & 311    & 0.1425 & 0.1328 & 12.69 &  300 & 0.1432   & 0.2913\\
  &$\bm{H}_1^*$  &  15.38 & 99    & 0.1343 &0.1325 & 15.76 &  89  & 0.1520 & 0.2912\\
  &NW(**)            &  5 & 1000     & 4372.00 & 0   & 14.61 &  265  & 0.1377 & 0.0136 \\
  &SS            &  256.41 & 2    & 0.1349 & 0.1329   & 256.41 & 2 & 0.1367 & 0.2925\\
  \midrule
  \multirow{2}{5em}{$\widehat{\sigma}$}&$\bm{H}_0^*$  &  13.46 & 281    & 0.1456 & 0.1213 & 12.69 &  300 & 0.1432 & 0.2913   \\
  &$\bm{H}_1^*$  &  15.38 & 97    & 0.1353 & 0.1214 & 15.76 &  89  & 0.1520 & 0.2912\\
  &NW(**)            &  5 & 1000     & 2406.3 & 0   & 14.61 &  265  & 0.1377 & 0.0136 \\
  &SS            &  128.20 & 1    & 0.1362 & 0.1217   & 256.41 & 2 & 0.1367 & 0.2925\\
  \midrule
  \multirow{2}{5em}{$1.6\widehat{\sigma}$} & $\bm{H}_0^*$  &  12.69 & 327  & 0.1420  & 0.1375 & 12.69 &  300 & 0.1432   & 0.2913\\
  &$\bm{H}_1^*$  &  15.38 & 91   & 0.1350 & 0.1372 & 15.76 &  89  & 0.1520 & 0.2912\\
  &NW(**)            &  5 & 1000     & 6153.28  & 0   &14.61 &  265  & 0.1377 & 0.0136 \\
  &SS            &  256.41 & 2    & 0.1352 & 0.1377   & 256.41 & 2 & 0.1367 & 0.2925\\

  \bottomrule
  \end{tabular*}
  \end{minipage}
  \end{center}
\end{table}

The results for the GCV approach are provided in Table \ref{tab:real-ex-gcv}. We observe that the GCV criterion for the NW smoother fail to  provide correct results. Given the GCV formulation, we attribute this issue to the NW smoother which lacks some appealing properties of $\bm{H}_0^*$ and $\bm{H}_1^*$ smoothers. For the remaining three smoothers, the results from this table remain similar to the results from Table \ref{tab:real-ex}. Because GCV is not as computationally costly as cross-validation, this finding further illustrates the usefulness of the $\bm{H}^*$ smoothers. 

In conclusion, the proposed robust boosting approaches outperform their non-robust counterparts.

\section{Summary and Conclusions} \label{sec:sc}
We present a novel kernel regression based $L_2$ boosting approach to estimate a univariate nonparametric model. In the context of $L_2$ boosting, the suggested method overcomes the shortcomings of existing kernel-based methods. The theory established for spline smoothing is easily applicable since the smoother $\bm{H}^*$ utilized in the study is symmetric and has eigenvalues in the range $(0,1]$.  Simultaneously, specific results for kernel smoothing may also be obtained. We consider low-rank smoothers instead of the original smoother in our asymptotic framework. Low-rank smoothers make the boosting algorithm scalable to big datasets. In addition to the computational gains, based on our numerical results, we find that the low-rank smoothers may outperform the full-rank smoothers in terms of test data prediction error. Furthermore, we also robustify the proposed boosting procedure to alleviate the effect of outliers.

The present study considers  $L_2$ boosting for a univariate nonparametric model. The development of the $L_2$ boosting procedure for additive models is an intriguing area for future research. This topic may be very helpful in practice because the boosting approach also does variable selection.  

\section*{Acknowledgements}
The author would like to thank Prof. Li-Shan Huang for introducing projection smoothers to him and Dr. Abhijit Mandal for suggesting an important reference on robust smoothing.

\bibliographystyle{agsm} 
\bibliography{kboostref}

@book{fan2018local,
  title={Local polynomial modelling and its applications: monographs on statistics and applied probability 66},
  author={Fan, Jianqing and Gijbels, Irene},
  year={2018},
  publisher={Routledge}
}

@article{halko2011finding,
  title={Finding structure with randomness: Probabilistic algorithms for constructing approximate matrix decompositions},
  author={Halko, Nathan and Martinsson, Per-Gunnar and Tropp, Joel A},
  journal={SIAM review},
  volume={53},
  number={2},
  pages={217--288},
  year={2011},
  publisher={SIAM}
}

@article{hardle1985optimal,
  title={Optimal bandwidth selection in nonparametric regression function estimation},
  author={Hardle, Wolfgang and Marron, James Stephen},
  journal={The Annals of Statistics},
  pages={1465--1481},
  year={1985},
  publisher={JSTOR}
}

@article{hardle1988far,
  title={How far are automatically chosen regression smoothing parameters from their optimum?},
  author={H{\"a}rdle, Wolfgang and Hall, Peter and Marron, James Stephen},
  journal={Journal of the American Statistical Association},
  volume={83},
  number={401},
  pages={86--95},
  year={1988},
  publisher={Taylor \& Francis}
}

@article{roozbeh2018optimal,
  title={Optimal QR-based estimation in partially linear regression models with correlated errors using GCV criterion},
  author={Roozbeh, Mahdi},
  journal={Computational Statistics \& Data Analysis},
  volume={117},
  pages={45--61},
  year={2018},
  publisher={Elsevier}
}

@article{amini2015optimal,
  title={Optimal partial ridge estimation in restricted semiparametric regression models},
  author={Amini, Morteza and Roozbeh, Mahdi},
  journal={Journal of Multivariate Analysis},
  volume={136},
  pages={26--40},
  year={2015},
  publisher={Elsevier}
}

@article{ju2021robust,
  title={Robust boosting for regression problems},
  author={Ju, Xiaomeng and Salibi{\'a}n-Barrera, Mat{\'\i}as},
  journal={Computational Statistics \& Data Analysis},
  volume={153},
  pages={107065},
  year={2021},
  publisher={Elsevier}
}

@article{li2018boosting,
  title={Boosting in the presence of outliers: adaptive classification with nonconvex loss functions},
  author={Li, Alexander Hanbo and Bradic, Jelena},
  journal={Journal of the American Statistical Association},
  volume={113},
  number={522},
  pages={660--674},
  year={2018},
  publisher={Taylor \& Francis}
}

@article{miao2015rboost,
  title={RBoost: Label noise-robust boosting algorithm based on a nonconvex loss function and the numerically stable base learners},
  author={Miao, Qiguang and Cao, Ying and Xia, Ge and Gong, Maoguo and Liu, Jiachen and Song, Jianfeng},
  journal={IEEE transactions on neural networks and learning systems},
  volume={27},
  number={11},
  pages={2216--2228},
  year={2015},
  publisher={IEEE}
}

@article{kishore2017literature,
  title={Literature survey on low rank approximation of matrices},
  author={Kishore Kumar, N and Schneider, Jan},
  journal={Linear and Multilinear Algebra},
  volume={65},
  number={11},
  pages={2212--2244},
  year={2017},
  publisher={Taylor \& Francis}
}

@article{chu2003structured,
  title={Structured low rank approximation},
  author={Chu, Moody T and Funderlic, Robert E and Plemmons, Robert J},
  journal={Linear algebra and its applications},
  volume={366},
  pages={157--172},
  year={2003},
  publisher={Elsevier}
}

@article{Julia-2017,
    title={Julia: A fresh approach to numerical computing},
    author={Bezanson, Jeff and Edelman, Alan and Karpinski, Stefan and Shah, Viral B},
    journal={SIAM {R}eview},
    volume={59},
    number={1},
    pages={65--98},
    year={2017},
    publisher={SIAM},
    doi={10.1137/141000671},
    url={https://epubs.siam.org/doi/10.1137/141000671}
}

@article{hayfield2008nonparametric,
  title={Nonparametric econometrics: The np package},
  author={Hayfield, Tristen and Racine, Jeffrey S},
  journal={Journal of statistical software},
  volume={27},
  number={5},
  pages={1--32},
  year={2008},
  publisher={American Statistical Association}
}

@article{ullah1985specification,
  title={Specification analysis of econometric models},
  author={Ullah, Amman},
  journal={Journal of quantitative economics},
  volume={1},
  number={2},
  pages={187--209},
  year={1985}
}

@book{pagan1999nonparametric,
  title={Nonparametric econometrics},
  author={Pagan, Adrian and Ullah, Aman and Gourieroux, Christian and Phillips, Peter CB and Wickens, Michael},
  journal={Cambridge Books},
  year={1999},
  publisher={Cambridge University Press}
}

@book{huber2004robust,
  title={Robust statistics},
  author={Ronchetti, Elvezio M and Huber, Peter J},
  year={2009},
  publisher={John Wiley \& Sons}

}

@article{mammen1999existence,
  title={The existence and asymptotic properties of a backfitting projection algorithm under weak conditions},
  author={Mammen, Enno and Linton, Oliver and Nielsen, J17424960986},
  journal={The Annals of Statistics},
  volume={27},
  number={5},
  pages={1443--1490},
  year={1999},
  publisher={Institute of Mathematical Statistics}
}

@article{schapire1999improved,
  title={Improved boosting algorithms using confidence-rated predictions},
  author={Schapire, Robert E and Singer, Yoram},
  journal={Machine learning},
  volume={37},
  number={3},
  pages={297--336},
  year={1999},
  publisher={Springer}
}

@article{friedman2000additive,
  title={Additive logistic regression: a statistical view of boosting (with discussion and a rejoinder by the authors)},
  author={Friedman, Jerome and Hastie, Trevor and Tibshirani, Robert},
  journal={The annals of statistics},
  volume={28},
  number={2},
  pages={337--407},
  year={2000},
  publisher={Institute of Mathematical Statistics}
}

@article{friedman2001greedy,
  title={Greedy function approximation: a gradient boosting machine},
  author={Friedman, Jerome},
  journal={Annals of statistics},
  volume={29},
  number={5},
  pages={1189--1232},
  year={2001},
  publisher={JSTOR}
}

@article{breiman1998arcing,
  title={Arcing classifier (with discussion and a rejoinder by the author)},
  author={Breiman, Leo},
  journal={The annals of statistics},
  volume={26},
  number={3},
  pages={801--849},
  year={1998},
  publisher={Institute of Mathematical Statistics}
}

@article{breiman1999prediction,
  title={Prediction games and arcing algorithms},
  author={Breiman, Leo},
  journal={Neural computation},
  volume={11},
  number={7},
  pages={1493--1517},
  year={1999},
  publisher={MIT Press One Rogers Street, Cambridge, MA 02142-1209}
}

@article{freund1995boosting,
  title={Boosting a weak learning algorithm by majority},
  author={Freund, Yoav},
  journal={Information and computation},
  volume={121},
  number={2},
  pages={256--285},
  year={1995},
  publisher={Elsevier}
}

@article{bartlett1998boosting,
  title={Boosting the margin: A new explanation for the effectiveness of voting methods},
  author={Bartlett, Peter and Freund, Yoav and Lee, Wee Sun and Schapire, Robert E},
  journal={The annals of statistics},
  volume={26},
  number={5},
  pages={1651--1686},
  year={1998},
  publisher={Institute of Mathematical Statistics}
}

@article{freund1996experiments,
  title={Experiments with a new boosting algorithm},
  author={Freund, Yoav and Schapire, Robert E},
  journal={icml},
  volume={96},
  pages={148--156},
  year={1996},
  organization={Citeseer}
}

@article{schapire1990strength,
  title={The strength of weak learnability},
  author={Schapire, Robert E},
  journal={Machine learning},
  volume={5},
  number={2},
  pages={197--227},
  year={1990},
  publisher={Springer}
}

@article{cantoni2001resistant,
  title={Resistant selection of the smoothing parameter for smoothing splines},
  author={Cantoni, Eva and Ronchetti, Elvezio},
  journal={Statistics and Computing},
  volume={11},
  number={2},
  pages={141--146},
  year={2001},
  publisher={Springer}
}

@article{lutz2008robustified,
  title={Robustified L2 boosting},
  author={Lutz, Roman Werner and Kalisch, Markus and B{\"u}hlmann, Peter},
  journal={Computational Statistics \& Data Analysis},
  volume={52},
  number={7},
  pages={3331--3341},
  year={2008},
  publisher={Elsevier}
}

@article{oh2007role,
  title={The role of pseudo data for robust smoothing with application to wavelet regression},
  author={Oh, Hee-Seok and Nychka, Douglas W and Lee, Thomas CM},
  journal={Biometrika},
  volume={94},
  number={4},
  pages={893--904},
  year={2007},
  publisher={Oxford University Press}
}

@article{cox1983asymptotics,
  title={Asymptotics for M-type smoothing splines},
  author={Cox, Dennis D},
  journal={The Annals of Statistics},
  pages={530--551},
  year={1983},
  publisher={JSTOR}
}

@article{oh2008recipe,
  title={A recipe for robust estimation using pseudo data},
  author={Oh, Hee-Seok and Lee, Jaeyong and Kim, Donghoh},
  journal={Journal of the Korean Statistical Society},
  volume={37},
  number={1},
  pages={63--72},
  year={2008},
  publisher={Springer}
}

@article{he2009double,
  title={Double-smoothing for bias reduction in local linear regression},
  author={He, Hua and Huang, Li-Shan},
  journal={Journal of Statistical Planning and Inference},
  volume={139},
  number={3},
  pages={1056--1072},
  year={2009},
  publisher={Elsevier}
}

@article{wood2015package,
  title={Mixed GAM computation vehicle with GCV/AIC/REML smoothness estimation and GAMMs by REML/PQL},
  author={Wood, Simon},
  journal={R package version},
  pages={1--8},
  year={2018}
}

@article{rousseeuw1993alternatives,
  title={Alternatives to the median absolute deviation},
  author={Rousseeuw, Peter J and Croux, Christophe},
  journal={Journal of the American Statistical association},
  volume={88},
  number={424},
  pages={1273--1283},
  year={1993},
  publisher={Taylor \& Francis}
}

@article{hua08analysis,
  title={Analysis of variance, coefficient of determination and F-test for local polynomial regression},
  author={Huang, Li-Shan and Chen, Jianwei},
  journal={The Annals of Statistics},
  volume={36},
  number={5},
  pages={2085--2109},
  year={2008},
  publisher={Institute of Mathematical Statistics}
}

@article{hua14local,
  title={Local polynomial and penalized trigonometric series regression},
  author={Huang, Li-Shan and Chan, Kung-Sik},
  journal={Statistica Sinica},
  volume={24},
  pages={1215--1238},
  year={2014},
  publisher={JSTOR}
}

@article{buhlmann2003boosting,
  title={Boosting with the $L_2$ loss: regression and classification},
  author={B{\"u}hlmann, Peter and Yu, Bin},
  journal={Journal of the American Statistical Association},
  volume={98},
  number={462},
  pages={324--339},
  year={2003},
  publisher={Taylor \& Francis}
}

@article{park2009,
  title={$L_2$ boosting in kernel regression},
  author={Park, BU and Lee, YK and Ha, S},
  journal={Bernoulli},
  volume={15},
  number={3},
  pages={599--613},
  year={2009},
  publisher={JSTOR}
}

@book{tukey1977exploratory,
  title={Exploratory data analysis},
  author={Tukey, John W },
  volume={2},
  year={1977},
  publisher={Pearson}
}

@incollection{stuetzle1979some,
  title={Some comments on the asymptotic behavior of robust smoothers},
  author={Stuetzle, Werner and Mittal, Yashaswini},
  booktitle={Smoothing Techniques for Curve Estimation},
  pages={191--195},
  year={1979},
  publisher={Springer}
}

@article{di2008boosting,
  title={On boosting kernel regression},
  author={Di Marzio, Marco and Taylor, Charles C},
  journal={Journal of Statistical Planning and Inference},
  volume={138},
  number={8},
  pages={2483--2498},
  year={2008},
  publisher={Elsevier}
}

@article{cornillon2014recursive,
  title={Recursive bias estimation for multivariate regression smoothers},
  author={Cornillon, Pierre-Andr{\'e} and Hengartner, Nicolas W and Matzner-L{\o}ber, Eric},
  journal={ESAIM: Probability and Statistics},
  volume={18},
  pages={483--502},
  year={2014},
  publisher={EDP Sciences}
}
\end{document}